\documentclass[sigconf]{acmart}
\AtBeginDocument{%
  }

\setcopyright{acmlicensed}
\copyrightyear{2026}
\acmYear{2026}
\acmDOI{XXXXXXX.XXXXXXX}

\acmJournal{JDS}
\acmVolume{37}
\acmNumber{4}
\acmMonth{8}

\usepackage[ruled,vlined]{algorithm2e}

\usepackage{tikz}

\copyrightyear{2026}
\acmYear{2026}
\setcopyright{cc}
\setcctype{by}
\acmConference[ISSAC '26]{51st International Symposium on Symbolic and
Algebraic Computation}{July 13--17, 2026}{Oldenburg, Germany}
\acmBooktitle{51st International Symposium on Symbolic and Algebraic
Computation (ISSAC '26), July 13--17, 2026, Oldenburg, Germany}
\acmDOI{10.1145/3815436.3815448}
\acmISBN{979-8-4007-2595-1/2026/07}

\begin{document}

%%
%% The "title" command has an optional parameter,
%% allowing the author to define a "short title" to be used in page headers.
\title[Learning unions of intersecting affine modules]{Learning Unions of Intersecting Affine Modules in One
Dimension with Queries}

%%
%% The "author" command and its associated commands are used to define
%% the authors and their affiliations.
%% Of note is the shared affiliation of the first two authors, and the
%% "authornote" and "authornotemark" commands
%% used to denote shared contribution to the research.
\author{Eva González}
\orcid{0009-0009-3839-4897}
%\author{G.K.M. Tobin}
%\email{webmaster@marysville-ohio.com}
\affiliation{%
  \institution{RPTU Kaiserslautern-Landau}
  \city{Kaiserslautern}
  %\state{Rhineland-Palatinate}
  \country{Germany}
}
\email{eva.gonzalezgarcia@rptu.de}

\author{Montserrat Hermo}
\orcid{0000-0001-5627-501X}
\affiliation{%
  \institution{University of the Basque Country}
  \city{San Sebastían}
  \country{Spain}
}
\email{montserrat.hermo@ehu.eus}

\author{Anthony Lin}
\orcid{0000-0003-4715-5096}
\affiliation{%
 \institution{RPTU Kaiserslautern-Landau and MPI-SWS}
 \city{Kaiserslautern}
 %\state{Arunachal Pradesh}
 \country{Germany}
}
\email{awlin@mpi-sws.org}

% \author{Huifen Chan}
% \affiliation{%
%   \institution{Tsinghua University}
%   \city{Haidian Qu}
%   \state{Beijing Shi}
%   \country{China}}

% \author{Charles Palmer}
% \affiliation{%
%   \institution{Palmer Research Laboratories}
%   \city{San Antonio}
%   \state{Texas}
%   \country{USA}}
% \email{cpalmer@prl.com}

% \author{John Smith}
% \affiliation{%
%   \institution{The Th{\o}rv{\"a}ld Group}
%   \city{Hekla}
%   \country{Iceland}}
% \email{jsmith@affiliation.org}

% \author{Julius P. Kumquat}
% \affiliation{%
%   \institution{The Kumquat Consortium}
%   \city{New York}
%   \country{USA}}
% \email{jpkumquat@consortium.net}

%%
%% By default, the full list of authors will be used in the page
%% headers. Often, this list is too long, and will overlap
%% other information printed in the page headers. This command allows
%% the author to define a more concise list
%% of authors' names for this purpose.
\renewcommand{\shortauthors}{E. González, M. Hermo and A. Lin}
%%
%% Article type: Research, Review, Discussion, Invited or position
\acmArticleType{Research}
%%
%% Links to code and data
% \acmCodeLink{https://github.com/borisveytsman/acmart}
% \acmDataLink{htps://zenodo.org/link}
%%
%% Authors' contribution
% \acmContributions{BT and GKMT designed the study; LT, VB, and AP
%   conducted the experiments, BR, HC, CP and JS analyzed the results,
%   JPK developed analytical predictions, all authors participated in
%   writing the manuscript.}
%%
%% Sometimes the addresses are too long to fit on the page.  In this
%% case uncomment the lines below and fill them accodingly.
%%
%% \authorsaddresses{Corresponding author: Ben Trovato,
%% \href{mailto:trovato@corporation.com}{trovato@corporation.com};
%% Institute for Clarity in Documentation, P.O. Box 1212, Dublin,
%% Ohio, USA, 43017-6221}
%%
%%
%% Keywords. The author(s) should pick words that accurately describe
%% the work being presented. Separate the keywords with commas.

\begin{abstract}
We study the exact learnability of finite unions of intersecting affine modules in one dimension. An affine module is a set of the form $a+\sum_{j=1}^{s}b_j \mathbb{Z}$, where $a,b_1,\ldots,b_s\in\mathbb{N}$. We say that a set definable as a finite union of affine modules is a union of intersecting affine modules if it admits a representation in which all modules have a non-empty intersection. We show that this class is efficiently exactly learnable using equivalence and subset queries. Moreover, subset queries can be replaced with membership queries when a common element is known. Our algorithm requires at most $k\log(2|x_\ell|)+2k$ counterexamples, where $k$ is the number of affine modules in the smallest representation and $x_\ell$ is the largest counterexample. This implies polynomial-time learnability in the binary representation.
\end{abstract}

\keywords{exact learning, arithmetic concept classes, unions of modules}

\maketitle

\section*{Introduction}

Learning theory provides formal models for how an algorithm can recover an unknown structure from observations. Two classical frameworks are the PAC model, where the goal is to obtain a hypothesis that is probably approximately correct from random examples \cite{valiant1984theory}, and the exact learning model with queries, where the learner can actively interact with a teacher through queries, such as equivalence and membership, to identify the target object precisely \cite{angluin1987learning}. Angluin~\cite{angluin1988queries} showed that efficient exact learnability with equivalence queries implies efficient PAC learnability. Moreover, the PAC model is inherently passive, as the learner only receives random labeled examples. Allowing membership queries leads to an active extension of the PAC model, usually referred to as PAC learning with membership queries. While learning theory has traditionally focused on concept classes such as automata, Boolean formulas, or half-spaces, there has also been work on learning semilinear sets and specific subclasses, such as finite unions of modules or hypercubes.

Semilinear sets are exactly the sets definable in linear integer arithmetic \cite{chistikov2024introduction}, and they also coincide with the Parikh images of context-free languages \cite{parikh1966context}, establishing a connection with automata. Related work on learning automata includes \cite{van2020learning} where the authors study the learnability of weighted automata over principal ideal domains, and \cite{buna2024learning}, which presents an exact learning algorithm for $\mathbb{Z}$-automata with a teacher supporting equivalence and membership queries.

The exact learnability of semilinear sets in unary representation was studied by Takada~\cite{TAKADA1992207}. Abe~\cite{abe1995characterizing} showed that PAC learning semilinear sets is polynomial in the unary encoding of numbers for dimension $d\leq{2}$ and provided several lower bounds via reductions to the problem of predicting DNF formulas. Later, Kopczynski and To~\cite{kopczynski2010parikh} showed that PAC learning semilinear sets in unary representation is polynomial for any fixed dimension. Regarding modules, Helmbold et al.~\cite{helmbold1992learning} proved that a single module is PAC-learnable in polynomial time in the binary representation for arbitrary dimensions. Abe~\cite{abe1995characterizing} further analyzed finite unions of modules, showing that they are polynomially PAC learnable in unary but as hard as learning DNF formulas in binary. A related line of work considers exact learning with membership and equivalence queries. Goldberg et al.~\cite{goldberg1994learning} proved that finite unions of hypercubes with natural corners are polynomially learnable. More recently, Markgraf et al.~\cite{markgraf2021learning} showed that finite unions of integer hypercubes are learnable with respect to the binary encoding of numbers, and applied this result to the monadic decomposition of a fragment of linear integer arithmetic. These classes can be seen as subclasses of semilinear sets, in which the periods correspond to standard basis vectors.

We show that any finite union of one-dimensional intersecting affine modules can be learned exactly using equivalence and subset queries in time polynomial in the binary representation. Moreover, subset queries can be replaced with membership queries when a common element is known. A one-dimensional affine module is a set of the form
\begin{align*}
\rho(a,\{b_1,\dots,b_s\}):=\left\{a+\sum_{j=1}^sb_j\alpha_j~\middle|~\alpha_j\in\mathbb{Z}\right\}
\end{align*}
where $s\in\mathbb{N}$ and $a,b_1,\ldots,b_s\in\mathbb{N}$. A union of affine modules can admit different representations. We say that a set definable as a finite union of affine modules is a \emph{union of intersecting affine modules} if it admits a representation in which the intersection of all modules is non-empty. We present a learning algorithm and prove its correctness and efficiency. The time complexity depends on the size of the smallest representation and on the size of the largest counterexample returned by the equivalence queries. We are not aware of any polynomial-time learning algorithm for unions of modules in the binary representation. Existing algorithms that are polynomial in the size of the smallest representation are polynomial only in the unary representation \cite{TAKADA1992207}.
While the algorithm for learning a single module \cite{helmbold1992learning} runs in polynomial time in the binary representation, its complexity also depends on the size of the largest counterexample.

\medskip

\textbf{Paper organization.} Section~\ref{sec:preliminaries} introduces the exact learning framework together with the basic preliminaries on one-dimensional affine modules and their unions. Section~\ref{sec:learning-algorithm} presents a learning algorithm for unions of affine modules and analyzes its correctness, first for a single affine module and then for unions of affine modules. Section~\ref{sec:intersecting} proves the efficiency of the algorithm for unions of intersecting affine modules and explains how to adapt it to work with membership queries instead of subset queries when a common element is known. Finally, the conclusion contains general remarks on the algorithm and its correctness proof.

\section{Preliminaries}\label{sec:preliminaries}

We denote by $\mathbb{N}$ the set of natural numbers, including $0$, and by $\mathbb{Z}$ the set of integers. For brevity, we use the term \emph{module} to refer to one-dimensional affine modules, although the term was previously used only in the case $a=0$.

\subsection{Learning framework}\label{sec:exact-learning-framework}

For a general overview on learning theory, the reader may consult \cite{kearns1994introduction}. Our \emph{instance space} is the set of integers $\mathbb{Z}$. We denote by $\mathcal{C}_M$ the \emph{concept class} of all sets of integers definable as modules. By convention, we include the empty set and individual elements of $\mathbb{Z}$ in $\mathcal{C}_M$. The class $\mathcal{C}_{U}^*$ consists of all sets of integers that are finite unions of concepts in $\mathcal{C}_M$ such that there is a representation of the union whose modules have a non-empty intersection. Let $\mathcal{C}\in\{\mathcal{C}_M,\mathcal{C}_{U}^*\}$ and $C\in\mathcal{C}$ be a target concept. A priori, our learning algorithm can make the following types of queries.
\begin{itemize}
\item \emph{equivalence:} for $H\in\mathcal{C}$, $$e(H)=\left\{
 \begin{array}{ll}
 \top,                                  &\textrm{ if }H=C,\\
 x\in(C\setminus{H})\cup(H\setminus{C}),    &\textrm{ otherwise.}
 \end{array}\right.$$
\item \emph{subset:} for $M\in\mathcal{C}$, $$s(M)=\top\textrm{ if and only if }M\subseteq{C}\textrm{ and }\bot\textrm{ otherwise.}$$
\item \emph{membership:} for $x\in{X}$, $$m(x)=\top\textrm{ if and only if }x\in{C}\textrm{ and }\bot\textrm{ otherwise.}$$
\end{itemize}

If $e(H)$ belongs to $C$, it is a \textit{positive counterexample}, otherwise it is a \textit{negative counterexample}. Since the hypothesis is always contained in the target set, our algorithm uses only positive counterexamples, which we simply call counterexamples. Note that the learning algorithm for $\mathcal{C}_M$ uses only equivalence queries. 
On the other hand, the algorithm for $\mathcal{C}_U^*$ uses both equivalence and subset queries, although the latter can be replaced with membership queries when a common element shared by all modules in the union is known. In our case, subset queries are performed on individual modules.

A concept can have different representations. We refer to a representation of minimal size as a \emph{smallest representation}. It is assumed that numbers are represented in binary and define size$(x):=1 +\lceil\log(|x|+1)\rceil$ for $x\in\mathbb{Z}$, which is $O(\log(|x|))$. The size of a representation of a union of modules is then defined as
\[
\text{size}\!\left(\cup_{i=1}^{k}\rho(a_i,\{b_{i1},\ldots,b_{is_i}\})\right)
:=\sum_{i=1}^{k}\text{size}(a_i)
+\sum_{j=1}^{s_i}\text{size}(b_{ij}).
\]

We say that a concept class $\mathcal{C}$ is \emph{efficiently exactly learnable} with equivalence, subset, and/or membership queries if there exists an algorithm using these queries that, for every $C\in\mathcal{C}$, returns an equivalent hypothesis in time polynomial in the size of the smallest representation of $C$ and the size of the largest counterexample.

\subsection{Affine modules}

A module is a set of the form 
\begin{align}\label{def:module}
\rho(a,\{b_1,\dots,b_s\}):&=\left\{a+\sum_{j=1}^sb_j\alpha_j~\middle|~\alpha_j\in\mathbb{Z}\right\} \\
&=\rho(a,\gcd(b_1,\ldots,b_s)) \notag
\end{align}
where $s\in\mathbb{N}$ and $a,b_1,\ldots,b_s\in\mathbb{N}$. We call $a$ the \emph{representative}, and each $b_j$ a \emph{period}. Since $\alpha_j\in\mathbb{Z}$, we may assume without loss of generality that all periods are positive. As $\mathbb{Z}$ is a principal ideal domain, set~(\ref{def:module}) can be generated by a single period, and it is equivalent to $\rho(a,\gcd(b_1,\ldots,b_s))$; see~\cite[Section~1.5.2]{diekert2016discrete}. In what follows, we consider only the latter representation. When the representative and period are clear from the context, we write simply $\rho$ and its variants instead of $\rho(a,b)$, for example $\rho'=\rho(a',b')$ or $\rho_i=\rho(a_i,b_i)$. We use the symbol $\sigma=\sigma(c,d)$ to denote a particular kind of module, defined below.

We introduce several fundamental lemmas and propositions concerning modules, which, although well known, are included here for completeness; see \cite{jones1998elementary,beachy2019abstract,diekert2016discrete} for related results. The elements of a module are characterized as the integers whose distance from the representative is divisible by its period.

\begin{lemma}\label{lem:element-contained-in-cartesian-congruence-class}
$x\in\rho$ if and only if $b\mid(x-a)$.
\end{lemma}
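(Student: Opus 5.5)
The plan is to unfold the definition $\rho=\rho(a,b)=\{a+b\alpha \mid \alpha\in\mathbb{Z}\}$, which we may use since we work only with the single-period representation, and check both directions directly.

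For the forward direction, take $x\in\rho$. By definition there is some $\alpha\in\mathbb{Z}$ with $x=a+b\alpha$. Then $x-a=b\alpha$, and this is exactly the statement $b\mid(x-a)$.

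For the converse, assume $b\mid(x-a)$. Then there is some $\alpha\in\mathbb{Z}$ with $x-a=b\alpha$. Hence $x=a+b\alpha$, which is an element of $\rho(a,b)$.

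The only point needing care is the degenerate period $b=0$. Here we adopt the usual convention that $0\mid y$ holds if and only if $y=0$. Under that convention, $\rho(a,0)=\{a\}$ and $0\mid(x-a)$ both say $x=a$, so the equivalence still holds; this case also covers the singletons the paper includes in $\mathcal{C}_M$.

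If one prefers to start from the multi-period form in (\ref{def:module}), the identity $\rho(a,\{b_1,\dots,b_s\})=\rho(a,\gcd(b_1,\ldots,b_s))$ quoted there reduces it to the case just treated. Nothing here is a real obstacle; the one point to state explicitly is the divisibility convention for $b=0$.
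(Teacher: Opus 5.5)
Your proof is correct and matches the paper's argument exactly: both directions simply unfold $x=a+b\alpha$ with $\alpha\in\mathbb{Z}$. Your separate treatment of $b=0$ is harmless but not needed. Defining $b\mid y$ as $y=b\alpha$ for some $\alpha\in\mathbb{Z}$ already gives $0\mid y$ if and only if $y=0$, so the same two-line argument covers that case without any extra convention.
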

\begin{proof}
We first prove that if $x\in\rho(a,b)$ then $b\mid(x-a)$. If $x\in\rho(a,b)$ then $x=a+b\alpha$ where $\alpha\in\mathbb{Z}$. Thus, $b\mid(x-a)$. Conversely, if $b\mid(x-a)$ then $x-a=b\alpha$ where $\alpha\in\mathbb{Z}$. Hence, $x\in\rho(a,b)$.
\end{proof}

Any element of a module can be its representative. For every module $\rho(a,b)$, there exists a representative $a$ in the interval $[0,b-1]$ by the division theorem. This representative is called the \emph{residue}. When a single period is used and the representative is chosen to be the residue, the resulting representation is minimal with respect to the size defined in Section~\ref{sec:exact-learning-framework}.
\begin{lemma}\label{lem:every-element-is-a-representative}
For all $a'\in\rho(a,b)$ we have $\rho(a,b)=\rho(a',b)$.
\end{lemma}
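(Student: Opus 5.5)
The plan is to prove the set equality $\rho(a,b)=\rho(a',b)$ by double inclusion. Both inclusions reduce to divisibility statements through Lemma~\ref{lem:element-contained-in-cartesian-congruence-class}. Throughout, the hypothesis $a'\in\rho(a,b)$ is used only in the form that lemma gives it: $b\mid(a'-a)$, that is, $a'-a=b\beta$ for some $\beta\in\mathbb{Z}$.

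\emph{First inclusion.} Take $x\in\rho(a',b)$. By Lemma~\ref{lem:element-contained-in-cartesian-congruence-class} we have $b\mid(x-a')$. I would write
\[
x-a=(x-a')+(a'-a).
\]
Since $b$ divides both summands, $b\mid(x-a)$, and the same lemma gives $x\in\rho(a,b)$. Equivalently, if $x=a'+b\alpha$, then $x=a+b(\alpha+\beta)$ with $\alpha+\beta\in\mathbb{Z}$.

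\emph{Second inclusion.} Take $x\in\rho(a,b)$, so $b\mid(x-a)$. Now write
\[
x-a'=(x-a)-(a'-a).
\]
Again $b$ divides both terms, so $b\mid(x-a')$, and hence $x\in\rho(a',b)$. Combining the two inclusions gives $\rho(a,b)=\rho(a',b)$.

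There is no real obstacle here. The only point needing care is that the coefficients range over all of $\mathbb{Z}$, not just $\mathbb{N}$. This is what allows the shift $\alpha\mapsto\alpha\pm\beta$ and the subtraction in the second inclusion. I would remark on this explicitly, since the analogous statement fails for linear sets with natural-number coefficients. The degenerate case $b=0$, where the module is a single point, needs no separate treatment: there $0\mid(a'-a)$ forces $a'=a$, and the argument above covers it without change.
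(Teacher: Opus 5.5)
Your proof is correct and follows the same double-inclusion argument as the paper. The paper works directly with coefficients ($x=a'+b\alpha'$ and $a'=a+b\alpha$, so $x=a+b(\alpha+\alpha')$, and symmetrically for the reverse inclusion); this is the coefficient form you give alongside your divisibility phrasing via Lemma~\ref{lem:element-contained-in-cartesian-congruence-class}.
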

\begin{proof}
First we prove that $\rho(a,b)\supseteq\rho(a',b)$. If $x\in\rho(a',b)$, then $x=a'+b\alpha'$. Since $a'\in\rho(a,b)$ we have that $a'=a+b\alpha$, so $x=a+b(\alpha+\alpha')$. Thus, $x\in\rho(a,b)$. Now, we prove that $\rho(a,b)\subseteq\rho(a',b)$. If $x\in\rho(a,b)$, then $x=a+b\alpha$. Since $a=a'-b\alpha'$ we have $x=a'+b(\alpha-\alpha')$. Therefore, $x\in\rho(a',b)$.
\end{proof}

A module is contained in another if and only if the period of the latter divides both the period of the former and the difference between their representatives.
\begin{proposition}\label{pro:congruence-class-contained-condition}
$\rho'\subseteq\rho$ if and only if $b\mid{b'}$ and $b\mid(a'-a)$.
\end{proposition}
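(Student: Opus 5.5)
We prove the two directions separately, relying only on Lemma~\ref{lem:element-contained-in-cartesian-congruence-class} and Lemma~\ref{lem:every-element-is-a-representative}. Throughout, $\rho=\rho(a,b)$ and $\rho'=\rho(a',b')$ with positive periods, as the paper assumes.

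For the forward direction, assume $\rho'\subseteq\rho$. Since $a'\in\rho'$, we get $a'\in\rho$, so $b\mid(a'-a)$ by Lemma~\ref{lem:element-contained-in-cartesian-congruence-class}. Next, $a'+b'$ is also an element of $\rho'$, hence of $\rho$, which gives $b\mid(a'+b'-a)$. Subtracting the two divisibilities yields $b\mid b'$.

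For the converse, assume $b\mid b'$ and $b\mid(a'-a)$. Take any $x\in\rho'$, so $x=a'+b'\alpha$ for some $\alpha\in\mathbb{Z}$. Then $x-a=(a'-a)+b'\alpha$, and both summands are divisible by $b$. Hence $b\mid(x-a)$, and Lemma~\ref{lem:element-contained-in-cartesian-congruence-class} gives $x\in\rho$. Alternatively, one could note that $a'\in\rho$, so Lemma~\ref{lem:every-element-is-a-representative} gives $\rho=\rho(a',b)$, and then use $b\mid b'$ directly.

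No step here is a real obstacle. The one point needing care is the forward direction, where we must pick a second element of $\rho'$ (namely $a'+b'$) to extract $b\mid b'$; the single element $a'$ alone would not suffice. One should also check the degenerate conventions: singletons are modules with period $0$, and there $b\mid 0$ holds trivially. The empty set is not of the form $\rho(a,b)$, so it falls outside the statement.
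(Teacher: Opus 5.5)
Your proof is correct and matches the paper's argument: in both directions the paper applies Lemma~\ref{lem:element-contained-in-cartesian-congruence-class} to the same elements, namely $a'$ and $a'+b'$ for the forward direction and an arbitrary $x\in\rho'$ for the converse. The only difference is cosmetic: the paper first rewrites $\rho=\rho(a',b)$ via Lemma~\ref{lem:every-element-is-a-representative} (your stated alternative for the converse), whereas you add or subtract the divisibility relations directly.
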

\begin{proof}
We prove that if $\rho(a',b')\subseteq\rho(a,b)$ then $b\mid{b'}$ and $b\mid(a'-a)$. Since $a'\in\rho(a,b)$, by Lemma~\ref{lem:element-contained-in-cartesian-congruence-class}, we have that  $b\mid(a'-a)$ and by Lemma~\ref{lem:every-element-is-a-representative} $\rho(a,b)=\rho(a',b)$. On the other hand, there is $x\in\rho(a',b')$ such that $x-a'=b'$. Then, $b\mid{b'}$ by Lemma~\ref{lem:element-contained-in-cartesian-congruence-class} because $x\in\rho(a,b)=\rho(a',b)$. Conversely, since $b\mid(a'-a)$, by Lemma~\ref{lem:element-contained-in-cartesian-congruence-class}, we have that $a'\in\rho(a,b)$. Moreover, by Lemma~\ref{lem:every-element-is-a-representative} $\rho(a,b)=\rho(a',b)$. Now, if $x\in\rho(a',b')$ then by Lemma~\ref{lem:element-contained-in-cartesian-congruence-class} we have $b'\mid(x-a')$, and since $b\mid{b'}$ then $b\mid(x-a')$. Therefore, by Lemma~\ref{lem:element-contained-in-cartesian-congruence-class} $x\in\rho(a,b)=\rho(a',b)$.
\end{proof}

The corollary follows from Lemma~\ref{lem:element-contained-in-cartesian-congruence-class} and Proposition~\ref{pro:congruence-class-contained-condition}. 
\begin{corollary}\label{cor:membership-eqiuvalent-to-subset}
$a+b'\in\rho(a,b)$ if and only if $\rho(a,b')\subseteq\rho(a,b)$.
\end{corollary}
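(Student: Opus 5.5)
The plan is to reduce both sides of the equivalence to the single divisibility condition $b\mid b'$, using Lemma~\ref{lem:element-contained-in-cartesian-congruence-class} for the membership side and Proposition~\ref{pro:congruence-class-contained-condition} for the containment side.

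For the left-hand side, apply Lemma~\ref{lem:element-contained-in-cartesian-congruence-class} with $x=a+b'$. This gives $a+b'\in\rho(a,b)$ if and only if $b\mid\bigl((a+b')-a\bigr)$, that is, if and only if $b\mid b'$.

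For the right-hand side, apply Proposition~\ref{pro:congruence-class-contained-condition} with $\rho'=\rho(a,b')$ and $\rho=\rho(a,b)$. Here the two representatives coincide, so the proposition gives $\rho(a,b')\subseteq\rho(a,b)$ if and only if $b\mid b'$ and $b\mid(a-a)=0$. Every integer divides $0$, so the second condition holds automatically, and the containment is equivalent to $b\mid b'$ alone.

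Combining the two reductions, both statements are equivalent to $b\mid b'$, which proves the corollary. There is no real obstacle. The only point needing care is that the second divisibility condition in Proposition~\ref{pro:congruence-class-contained-condition} degenerates to $b\mid 0$, which is trivially true; this holds even in the degenerate case $b=0$, under the usual convention that $0\mid 0$.
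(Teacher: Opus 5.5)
Your proof is correct and is exactly the argument the paper intends. The paper only remarks that the corollary follows from Lemma~\ref{lem:element-contained-in-cartesian-congruence-class} and Proposition~\ref{pro:congruence-class-contained-condition}, and your reduction of both sides to $b\mid b'$ (the second divisibility condition collapsing to $b\mid 0$) spells that out, including the degenerate case $b=0$.
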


If a module contains two points, then it contains the module generated by these two points.
\begin{proposition}\label{pro:module-with-x-q-is-contained}
If $x,y\in\rho$, then $\rho(x,y-x)\subseteq\rho$.
\end{proposition}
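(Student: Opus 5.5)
The plan is to reduce the claim to Proposition~\ref{pro:congruence-class-contained-condition}, applied with $\rho'=\rho(x,y-x)$. That proposition says $\rho(x,y-x)\subseteq\rho(a,b)$ holds exactly when two divisibility conditions hold: $b\mid(y-x)$ and $b\mid(x-a)$. So it suffices to verify these two conditions from the hypothesis $x,y\in\rho$.

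The steps, in order, are as follows.

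First, since $x\in\rho(a,b)$, Lemma~\ref{lem:element-contained-in-cartesian-congruence-class} gives $b\mid(x-a)$. This is the second condition.

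Second, since $y\in\rho(a,b)$, the same lemma gives $b\mid(y-a)$.

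Third, subtracting, $b$ divides $(y-a)-(x-a)=y-x$. This is the first condition.

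Finally, Proposition~\ref{pro:congruence-class-contained-condition} yields $\rho(x,y-x)\subseteq\rho(a,b)$.

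One technical point needs care: the paper's convention takes periods in $\mathbb{N}$, while $y-x$ may be negative or zero.

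If $y-x<0$, I would note that $\rho(x,y-x)=\rho(x,x-y)$, because $\alpha$ ranges over $\mathbb{Z}$. Divisibility by $b$ is also unaffected by sign, so the argument goes through unchanged.

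If $y=x$, the module $\rho(x,0)$ is the singleton $\{x\}$. It is contained in $\rho$ simply because $x\in\rho$. This is also consistent with $b\mid 0$, so the proposition's condition still reads correctly.

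This sign-and-degenerate-case bookkeeping is the only step I expect to need any attention; the rest is a direct chain of the cited results.
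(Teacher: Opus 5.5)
Your proof is correct and takes essentially the paper's route: both arguments reduce the claim to Proposition~\ref{pro:congruence-class-contained-condition} by establishing $b\mid(y-x)$. The paper first re-bases $\rho$ at $x$ via Lemma~\ref{lem:every-element-is-a-representative}, which makes the second divisibility condition trivial, whereas you get $b\mid(y-x)$ by subtracting $b\mid(x-a)$ from $b\mid(y-a)$; your treatment of negative $y-x$ and the case $y=x$ is extra care the paper omits.
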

\begin{proof}
By Lemma~\ref{lem:every-element-is-a-representative} we can set $x$ as the representative of $\rho(a,b)$. Since $y\in\rho(x,b)$, we have by Lemma~\ref{lem:element-contained-in-cartesian-congruence-class} that $b\mid(y-x)$. Hence, $\rho(x,y-x)\subseteq\sigma(x,b)$ by Proposition~\ref{pro:congruence-class-contained-condition}.
\end{proof}

\subsection{Union of affine modules}

We use letter $U$ to represent a finite union of modules and use the subindex $i$ to denote a module $\rho_i=\rho(a_i,b_i)$ in a representation of a union $\cup_{i=1}^{k}\rho_i$ where $k\in\mathbb{N}$. For unions in which all modules have the same period, we write $\rho(A,h)=\cup\{\rho(a,h)~|~a\in{A}\}$ where $h\in\mathbb{N}$ and $A\subseteq[0,h-1]$ is the set of residues. We now present some preliminary definitions and results on unions of modules.

\begin{lemma}\label{lem:normal-form-lemma}
Let $b\mid{h}$, then $\rho(a,b)=\rho(A,h)$ where $$A=\left\{a+bj~|~j\in\left[0,\frac{h}{b}-1\right]\right\}.$$
\end{lemma}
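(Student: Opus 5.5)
We prove the two inclusions $\rho(a,b)\supseteq\rho(A,h)$ and $\rho(a,b)\subseteq\rho(A,h)$ separately, using only Lemma~\ref{lem:element-contained-in-cartesian-congruence-class} and Proposition~\ref{pro:congruence-class-contained-condition}. One caveat first: the notation $\rho(A,h)$ was introduced with $A\subseteq[0,h-1]$ a set of residues, but the elements $a+bj$ need not lie in that interval. We therefore read $A$ as a set of representatives. If residues are wanted, each $a+bj$ is replaced by its residue modulo $h$; by Lemma~\ref{lem:every-element-is-a-representative} this leaves every module $\rho(a+bj,h)$ unchanged.

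\textbf{The union is contained in $\rho(a,b)$.} Fix $j\in[0,h/b-1]$. The representatives differ by $(a+bj)-a=bj$, which $b$ divides, and $b\mid h$ by hypothesis. Proposition~\ref{pro:congruence-class-contained-condition} then gives $\rho(a+bj,h)\subseteq\rho(a,b)$. Taking the union over $j$ yields $\rho(A,h)\subseteq\rho(a,b)$.

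\textbf{$\rho(a,b)$ is contained in the union.} Let $x\in\rho(a,b)$. By Lemma~\ref{lem:element-contained-in-cartesian-congruence-class}, $x=a+b\alpha$ for some $\alpha\in\mathbb{Z}$. Apply the division theorem to $\alpha$ with divisor $h/b$, which is a positive integer since $b\mid h$ and the periods are positive. This gives $\alpha=q\frac{h}{b}+j$ with $j\in[0,h/b-1]$. Then
\[
x=a+bj+qh,
\]
so $h\mid x-(a+bj)$, and by Lemma~\ref{lem:element-contained-in-cartesian-congruence-class} we get $x\in\rho(a+bj,h)\subseteq\rho(A,h)$.

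The only step needing care is this division step: it must produce a remainder $j$ in exactly the index range $[0,h/b-1]$ used in the definition of $A$, which it does because the divisor is $h/b$. The rest is routine. Degenerate periods (such as $b=0$) are excluded by the standing assumption that periods are positive.
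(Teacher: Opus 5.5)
Your proof is correct and follows the paper's argument: the inclusion $\rho(a,b)\subseteq\rho(A,h)$ uses the same division of $\alpha$ by $h/b$ to write $x=a+bj+h\alpha_b$. For the reverse inclusion you cite Proposition~\ref{pro:congruence-class-contained-condition} (using $b\mid h$ and $b\mid bj$), whereas the paper rewrites $a+bj+h\alpha=a+b\left(j+\alpha\frac{h}{b}\right)$ directly, which is the same observation.
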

\begin{proof}
We first prove that $\rho(a,b)\subseteq\rho(A,h)$. If $x\in\rho(a,b)$ then $x=a+b\alpha$ for some $\alpha\in\mathbb{Z}$. If we divide $\alpha$ by $\frac{h}{b}$, we obtain $\alpha=\frac{h}{b}\alpha_b+j$ where $j\in\left[0,\frac{h}{b}-1\right]$ and $\alpha_b\in\mathbb{Z}$. Thus, $x=a+bj+h\alpha_b$, so $x\in\rho(A,h)$. Now, we prove that $\rho(a,b)\supseteq\rho(A,h)$. If $x\in\rho(A,h)$, then $x=a+bj+h\alpha$ where $j\in\left[0,\frac{h}{b}-1\right]$. Since $\frac{h}{b}\in\mathbb{Z}$, we conclude that $x=a+b\left(j+\alpha\frac{h}{b}\right)\in\rho(a,b)$.
\end{proof}

We say that a union of modules is in \emph{normal form} if all modules have the same period, that is, if it is expressed as $\rho(A,h)$. Any union of modules can be transformed into normal form. To this end, we set $h$ equal to the least common multiple of the periods of the modules in the union. For each module in the union, Lemma~\ref{lem:normal-form-lemma} produces a set of modules, and the union of all these sets constitutes a normal form.

\begin{proposition}\label{pro:normal-form}
Every union of modules has a representation in normal form.
\end{proposition}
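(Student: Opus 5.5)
The proof is a direct application of Lemma~\ref{lem:normal-form-lemma}, following the construction sketched just before the statement. Let $U=\cup_{i=1}^{k}\rho_i$ with $\rho_i=\rho(a_i,b_i)$. We first dispose of the degenerate concepts allowed by convention. If $k=0$ (the empty set), the empty union $\rho(\emptyset,h)$ for any $h$ is a normal form. Singletons need care in the presentation: a singleton is $\rho(a,0)$, and normalizing it requires a common period of $0$. We will therefore assume every $b_i>0$, treating the degenerate period-$0$ case separately if it must be covered: when all periods are $0$ the union is already in normal form $\rho(A,0)$ with $A$ the set of points, since period $0$ is shared.

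In the main case, set $h=\operatorname{lcm}(b_1,\ldots,b_k)$, so that $b_i\mid h$ for every $i$. Apply Lemma~\ref{lem:normal-form-lemma} to each $\rho_i$ to get
\[
\rho_i=\rho(A_i,h),\qquad A_i=\left\{a_i+b_ij~\middle|~j\in\left[0,\tfrac{h}{b_i}-1\right]\right\}.
\]
Then
\[
U=\cup_i\rho(A_i,h)=\rho\!\left(\cup_i A_i,h\right),
\]
since a union of unions of modules is the union over the combined index set. This is immediate from the definition of $\rho(A,h)$.

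The remaining requirement is that the residue set satisfy $A\subseteq[0,h-1]$. The elements $a_i+b_ij$ are not necessarily residues, so we replace each one by its residue modulo $h$, which lies in $[0,h-1]$ by the division theorem. By Lemma~\ref{lem:every-element-is-a-representative} this does not change the module, because the residue $r$ of $x$ satisfies $h\mid(x-r)$, so $r\in\rho(x,h)$ by Lemma~\ref{lem:element-contained-in-cartesian-congruence-class}. Any duplicates introduced by this replacement simply merge. This produces $U=\rho(A,h)$ with $A\subseteq[0,h-1]$.

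No step is a serious obstacle. The only points needing attention are the bookkeeping around residues and the degenerate empty and singleton members of $\mathcal{C}_M$, where period $0$ prevents the lcm construction from applying. For those I would state the convention explicitly rather than force them into the general argument.
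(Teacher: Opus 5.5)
Your proof is correct and follows the paper's own argument: take $h=\mathrm{lcm}(b_1,\ldots,b_k)$, expand each $\rho_i$ via Lemma~\ref{lem:normal-form-lemma}, and take the union of the resulting residue sets; your reduction modulo $h$ just makes explicit the residue bookkeeping the paper leaves implicit. One caution on your degenerate-case remark: when period-$0$ singletons are mixed with positive-period modules, no normal form exists at all (for instance $\{1\}\cup\rho(0,2)$ is infinite, yet it is not invariant under translation by any $h>0$, since $1+2h\notin\{1\}\cup\rho(0,2)$), so restricting to positive periods is a genuine hypothesis that the paper also tacitly assumes, not something a separate convention can repair.
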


We denote by $A_h(\rho)$ the set $\left\{x\bmod{h}~|~x\in\rho\right\}$ where $x\bmod{y}$ is the remainder of $x$ upon division by $y$. The first part of the following result is known and stated here in a form adapted to our setting.
\begin{proposition}\label{pro:periods-are-divisors-of-b}
\sloppy{The set $A_h(\rho(a,b))$ is equal to $$\left\{b^*\alpha+a^*~|~\alpha\in\left[0,\frac{h}{b^*}-1\right]\right\}$$ where $b^*=\gcd(h,b)$ and $a^*=(a\mod{b^*})$. Moreover, $\rho(a,b)\subseteq\rho(A_h(\rho(a,b)),h)=\rho(a^*,b^*)$.}
\end{proposition}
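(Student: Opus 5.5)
The first claim follows from Bézout and Lemma~\ref{lem:element-contained-in-cartesian-congruence-class}; the second from Lemma~\ref{lem:normal-form-lemma}.

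\emph{Inclusion of $A_h(\rho(a,b))$ in the given set.} Let $x=a+b\alpha\in\rho(a,b)$ and $r=x\bmod h$. Since $b^*\mid h$ and $b^*\mid b$, we get $r\equiv x\equiv a\pmod{b^*}$. Hence $r\equiv a^*\pmod{b^*}$, so $r=a^*+b^*\beta$ for some integer $\beta$. Because $0\le a^*<b^*$ and $0\le r\le h-1$, the integer $\beta$ lies in $[0,\frac{h}{b^*}-1]$.

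\emph{Reverse inclusion.} Take $\beta\in[0,\frac{h}{b^*}-1]$ and set $r=a^*+b^*\beta$, so $0\le r\le h-1$. I need some $x\in\rho(a,b)$ with $x\equiv r\pmod h$. The difference $r-a=(a^*-a)+b^*\beta$ is divisible by $b^*=\gcd(h,b)$. By Bézout there are $u,v\in\mathbb{Z}$ with $r-a=bu+hv$. Put $x=a+bu\in\rho(a,b)$. Then $x=r-hv$, so $x\bmod h=r$. This Bézout step is the only place needing care: one must check that $r-a$ is really divisible by $b^*$, which holds since $a^*\equiv a\pmod{b^*}$.

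\emph{Second statement.} The residues of $\rho(a^*,b^*)$ modulo $h$ are exactly $a^*+b^*j$ for $j\in[0,\frac{h}{b^*}-1]$. By Lemma~\ref{lem:normal-form-lemma}, applied with $b^*\mid h$, this gives $\rho(a^*,b^*)=\rho(A,h)$ with $A=A_h(\rho(a,b))$ by the first part. For the inclusion, $b^*\mid b$ and $b^*\mid(a-a^*)$, so Proposition~\ref{pro:congruence-class-contained-condition} gives $\rho(a,b)\subseteq\rho(a^*,b^*)$. Alternatively, each $x\in\rho(a,b)$ lies in $\rho(x\bmod h,h)$ with $x\bmod h\in A$.
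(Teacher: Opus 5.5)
Your proof is correct and follows essentially the same route as the paper: Bézout's identity to realize every residue $a^*+b^*\beta$ by an element of $\rho(a,b)$, and Proposition~\ref{pro:congruence-class-contained-condition} with Lemma~\ref{lem:normal-form-lemma} for the inclusion $\rho(a,b)\subseteq\rho(a^*,b^*)$ and the normal-form equality. The only minor difference is that you show $A_h(\rho(a,b))$ lies in the stated set by a direct computation modulo $b^*$, whereas the paper reads this off from $\rho(a,b)\subseteq\rho(a^*,b^*)$ and the normal-form decomposition of $\rho(a^*,b^*)$.
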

\begin{proof}
By the Bézout's identity and multiplying by  $\alpha^*\in\mathbb{Z}$, we have that there exist $q,r\in\mathbb{Z}$ such that $br\alpha^*=b^*\alpha^*-hq\alpha^*$. Since $a=b^*\alpha_{a}+a^*$ for a particular $\alpha_{a}\in\mathbb{Z}$, then $$a+br\alpha^*=b^*(\alpha_{a}+\alpha^*)+a^*-hq\alpha^*.$$
It occurs that for all $\alpha\in\left[0,\frac{h}{b^*}-1\right]$, we can find $\alpha^*\in\mathbb{Z}$ such that $\alpha_a+\alpha^*=\alpha$. Thus, for all $\alpha\in\left[0,\frac{h}{b^*}-1\right]$ there is $x\in\rho(a,b)$ such that $x=b^*\alpha+a^*+h\alpha_h$ for some $\alpha_h\in\mathbb{Z}$. Moreover, we can observe that $0\leq{b^*\alpha+a^*}<h$ for all $\alpha\in\left[0,\frac{h}{b^*}-1\right]$. Therefore, $$\left\{b^*\alpha+a^*~|~\alpha\in\left[0,\frac{h}{b^*}-1\right]\right\}\subseteq{A_h(\rho)}.$$

By Proposition~\ref{pro:congruence-class-contained-condition} it occurs that $\rho\subseteq\rho^*$ because $b^*\mid{b}$ and $b^*\mid(a-a^*)$. Moreover, by Lemma~\ref{lem:normal-form-lemma} we have that $\rho(a^*,b^*)=\rho\left(\left\{b^*\alpha+a^*~|~\alpha\in\left[0,\frac{h}{b^*}-1\right]\right\},h\right).$ Thus, $x\in\rho(a,b)$ is equal to $a^*+b^*\alpha+h\alpha_h$ where $\alpha\in\left[0,\frac{h}{b^*}-1\right]$ and $\alpha_h\in\mathbb{Z}$. Therefore, $$A_h(\rho)\subseteq\left\{b^*\alpha+a^*~|~\alpha\in\left[0,\frac{h}{b^*}-1\right]\right\}.$$
\end{proof}

We can observe that if $\rho(a,b)$ is contained in a union of modules~$U$ with normal form $\rho(A,h)$, then by Proposition~\ref{pro:periods-are-divisors-of-b} we have that $A_h(\rho)$ must be contained in $A$. Moreover, since $A_h(\rho)$ is contained in $A$, it follows that $\rho^*=\rho(A_h(\rho),h)$ is contained in $U$. Thus, we obtain the following result from the previous proposition.

\begin{corollary}\label{cor:periods-are-divisors-of-b}
Let $U=\rho(A,h)$ and $\rho\subseteq{U}$, then
$A_h(\rho)\subseteq{A}$. Moreover, $\rho\subseteq\rho^*\subseteq{U}$ where $b^*=\gcd(h,b)$ and $a^*=(a\mod{b^*})$.
\end{corollary}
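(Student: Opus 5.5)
The plan is to derive both parts of Corollary~\ref{cor:periods-are-divisors-of-b} directly from Proposition~\ref{pro:periods-are-divisors-of-b} and the definition of normal form.

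\textbf{First claim, $A_h(\rho)\subseteq A$.} Take any $x\in\rho$. Since $\rho\subseteq U=\rho(A,h)$, there is some $a'\in A$ with $x\in\rho(a',h)$. By Lemma~\ref{lem:element-contained-in-cartesian-congruence-class}, $h\mid(x-a')$. Because $a'\in[0,h-1]$, the division theorem then gives $x\bmod h=a'$. Hence $x\bmod h\in A$, and as $x$ was arbitrary, $A_h(\rho)\subseteq A$.

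\textbf{Second claim, $\rho\subseteq\rho^*\subseteq U$.} Proposition~\ref{pro:periods-are-divisors-of-b} already gives
\[
\rho\subseteq\rho(A_h(\rho),h)=\rho(a^*,b^*)=\rho^*.
\]
For the outer inclusion, the first claim gives $A_h(\rho)\subseteq A$. Taking unions over residues is monotone in the residue set, so
\[
\rho(A_h(\rho),h)=\bigcup_{a'\in A_h(\rho)}\rho(a',h)\subseteq\bigcup_{a'\in A}\rho(a',h)=U.
\]
Chaining the two displays gives $\rho\subseteq\rho^*\subseteq U$.

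There is no real obstacle; the only point needing care is that residues in $A$ lie in $[0,h-1]$, which makes $x\bmod h$ equal the residue of the module containing $x$. All the substantive work, namely identifying $\rho(A_h(\rho),h)$ with $\rho(a^*,b^*)$, is already done in Proposition~\ref{pro:periods-are-divisors-of-b}.
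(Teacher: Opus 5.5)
Your proof is correct and follows essentially the same route as the paper: first $A_h(\rho)\subseteq A$, then $\rho\subseteq\rho(A_h(\rho),h)=\rho^*$ by Proposition~\ref{pro:periods-are-divisors-of-b}, and monotonicity of $\rho(\cdot,h)$ in the residue set gives $\rho^*\subseteq U$. One small improvement: the paper attributes $A_h(\rho)\subseteq A$ loosely to Proposition~\ref{pro:periods-are-divisors-of-b}, whereas you derive it explicitly from Lemma~\ref{lem:element-contained-in-cartesian-congruence-class} and $A\subseteq[0,h-1]$.
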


Given a union of modules $U$, a module $\sigma\subseteq{U}$ is \emph{maximal} in~$U$ if there is no module $\rho\subseteq{U}$ such that $\sigma\varsubsetneq\rho$. The set of all maximal modules in $U$ is denoted by $C(U)$. We use the symbol $\sigma$ instead of $\rho$ to distinguish maximal modules from those that are not necessarily maximal. 
\begin{proposition}\label{pro:maximal-congruence-classes-cover-residues}
Let $U=\rho(A,h)$ and $\sigma(c,d)\in{C(U)}$, then $d\mid{h}$.
\end{proposition}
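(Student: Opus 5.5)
The plan is to argue by maximality, using Corollary~\ref{cor:periods-are-divisors-of-b}. Since $\sigma(c,d)\in C(U)$, we have in particular $\sigma\subseteq U=\rho(A,h)$. Applying Corollary~\ref{cor:periods-are-divisors-of-b} to $\sigma$ gives
\[
\sigma\subseteq\rho(c^*,d^*)\subseteq U,\qquad d^*=\gcd(h,d),\quad c^*=(c\bmod d^*).
\]
So $\rho(c^*,d^*)$ is a module contained in $U$ that contains $\sigma$.

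Maximality of $\sigma$ then rules out $\sigma\varsubsetneq\rho(c^*,d^*)$, so $\sigma=\rho(c^*,d^*)$. It remains to turn this equality of sets into a statement about periods. Apply Proposition~\ref{pro:congruence-class-contained-condition} in both directions:
\begin{itemize}
\item $\sigma\subseteq\rho(c^*,d^*)$ gives $d^*\mid d$;
\item $\rho(c^*,d^*)\subseteq\sigma$ gives $d\mid d^*$.
\end{itemize}
Since periods are taken positive, this yields $d=d^*=\gcd(h,d)$, which divides $h$. This proves the claim.

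The only delicate point is degenerate modules. By convention $\mathcal{C}_M$ includes singletons, which correspond to period $0$. For a singleton $\{c\}\subseteq U$, the corollary still applies with $\gcd(h,0)=h$: the singleton is contained in $\rho(c\bmod h,h)\subseteq U$, and this containment is strict. A singleton is therefore never maximal, so the degenerate case cannot occur for $\sigma\in C(U)$.

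If one prefers not to rely on the corollary, the same containment $\sigma\subseteq\rho(c^*,d^*)\subseteq U$ follows directly from Proposition~\ref{pro:periods-are-divisors-of-b}. That proposition gives $A_h(\sigma)\subseteq A$, because every residue of an element of $\sigma\subseteq U$ lies in $A$. It also identifies $\rho(A_h(\sigma),h)$ with $\rho(c^*,d^*)$. No further obstacle is expected; the argument is essentially this one application of maximality.
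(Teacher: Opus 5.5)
Your proof is correct and follows the paper's argument: apply Corollary~\ref{cor:periods-are-divisors-of-b} to get $\sigma\subseteq\rho(c^*,d^*)\subseteq U$ with $d^*=\gcd(h,d)$, use maximality to force equality, and conclude $d=d^*\mid h$. Your extra steps make explicit what the paper leaves implicit: you derive $d=d^*$ from mutual divisibility via Proposition~\ref{pro:congruence-class-contained-condition}, and you rule out the period-$0$ case (implicitly assuming $h\geq 1$, as the normal form $A\subseteq[0,h-1]$ already requires).
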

\begin{proof}
Since $\sigma(c,d)\subseteq{U}$, by Corollary~\ref{cor:periods-are-divisors-of-b} there is $\sigma'=\sigma(c',d')$ such that $\sigma\subseteq\sigma'\subseteq{U}$ where $d'=\gcd(h,d)$ and $c'=(c\mod{d'})$. If $\sigma\neq\sigma'$, we get a contradiction because $\sigma$ is maximal. Therefore, $\sigma=\sigma'$, so $d=d'$ and $d\mid{h}$ because $d'\mid{h}$ .
\end{proof}

Every module contained in a union of modules is contained in some maximal module.
\begin{proposition}\label{pro:module-contained-in-maximal}
$\rho\subseteq{U}$ if and only if there exists $\sigma\in{C(U)}$ such that $\rho\subseteq\sigma$.
\end{proposition}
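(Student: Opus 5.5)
The backward direction is immediate: if $\rho\subseteq\sigma$ for some $\sigma\in C(U)$, then since $\sigma\subseteq U$ by the definition of maximal module, we get $\rho\subseteq U$. So the content of the statement lies in the forward direction. There the main issue is to rule out an infinite strictly increasing chain of modules inside $U$; I plan to do this by working with a normal form of $U$, whose existence is given by Proposition~\ref{pro:normal-form}.

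For the forward direction, fix a normal form $U=\rho(A,h)$ and suppose $\rho\subseteq U$. If $\rho$ is empty or a single point, I first replace it by a genuine module contained in $U$ that contains it. If $U$ is empty there is nothing to show; otherwise a point $x\in U$ lies in some $\rho(a,h)\subseteq U$ with $a\in A$. So I may assume $\rho=\rho(a,b)$ with a period $b$. By Corollary~\ref{cor:periods-are-divisors-of-b}, we have $\rho\subseteq\rho(a^*,b^*)\subseteq U$, where $b^*=\gcd(h,b)$ divides $h$. This reduces the problem to modules whose period divides $h$.

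Now consider the family $\mathcal{F}$ of all modules $\rho(c,d)$ with $d\mid h$, $c\in[0,d-1]$, and $\rho(a^*,b^*)\subseteq\rho(c,d)\subseteq U$. This family is finite, since $h$ has finitely many divisors and each divisor admits finitely many residues. It is also non-empty, because it contains $\rho(a^*,b^*)$. Choose $\sigma\in\mathcal{F}$ that is maximal with respect to inclusion within $\mathcal{F}$; such an element exists because $\mathcal{F}$ is finite. Then $\rho\subseteq\sigma\subseteq U$.

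It remains to show that $\sigma\in C(U)$, that is, that $\sigma$ is maximal among all modules contained in $U$, not just among those in $\mathcal{F}$. Suppose some module $\rho'\subseteq U$ satisfies $\sigma\subsetneq\rho'$. Applying Corollary~\ref{cor:periods-are-divisors-of-b} to $\rho'$ gives $\rho'\subseteq\rho'^*\subseteq U$, where the period of $\rho'^*$ divides $h$. Hence $\rho'^*\in\mathcal{F}$, and $\sigma\subsetneq\rho'^*$. This contradicts the maximality of $\sigma$ in $\mathcal{F}$, so $\sigma\in C(U)$.

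The step I expect to need the most care is this final one: transferring maximality within the finite family $\mathcal{F}$ to maximality among all modules contained in $U$. The argument relies on the enlargement $\rho'\mapsto\rho'^*$ preserving both containment in $U$ and the strict inclusion $\sigma\subsetneq\rho'^*$, together with the handling of degenerate singleton modules, which have no proper period.
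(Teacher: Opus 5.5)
Your argument is correct, but it takes a different route from the paper's.

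\textbf{The paper's route.} The paper argues locally around $\rho(a,b)$. By Proposition~\ref{pro:congruence-class-contained-condition} and Lemma~\ref{lem:every-element-is-a-representative}, every module containing $\rho(a,b)$ has the form $\sigma(a,d)$ with $d\mid b$. So it takes the smallest divisor $d$ of $b$ with $\sigma(a,d)\subseteq U$. Any strictly larger module inside $U$ would have a proper divisor of $d$ as its period, which contradicts minimality.

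\textbf{Your route.} You argue globally. You pass to a normal form $\rho(A,h)$ and enlarge $\rho$ to period $\gcd(h,b)$ via Corollary~\ref{cor:periods-are-divisors-of-b}. You then pick an inclusion-maximal member of the finite family of modules with period dividing $h$, and transfer maximality back with the enlargement $\rho'\mapsto\rho'^{*}$.

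\textbf{What the paper's route buys.} It is lighter and more general. The infinite ascending chains you set out to rule out cannot exist above a module of positive period: their periods form a strictly decreasing divisibility chain of divisors of $b$. So for $b>0$ no structure of $U$ is needed at all. Your proof, by contrast, rests on Proposition~\ref{pro:normal-form}. It therefore only covers unions that can be written with positive periods; for example, $\{5\}\cup\rho(0,2)$ has no normal form.

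\textbf{What your route buys.} Your version handles the degenerate cases $\rho=\varnothing$ and $\rho=\{x\}$ explicitly. The paper's phrasing in terms of divisors of $b$ does not address these, since every integer divides $0$. Also, your last step is essentially the argument of Proposition~\ref{pro:maximal-congruence-classes-cover-residues}. It makes visible that maximal modules live among the finitely many modules with period dividing $h$.
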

\begin{proof}
To begin, we show that if $\rho(a,b)\subseteq{U}$, then there exists $\sigma(c,d)\in{C(U)}$ such that $\rho\subseteq\sigma$. We can take the representative of~$\rho$ to be the representative of $\sigma$. Given a module $\rho$, we consider all the modules in which it is contained. By Proposition~\ref{pro:congruence-class-contained-condition}, we know that these modules are $\sigma$ where $d\mid{b}$. Note that if $\sigma$ is not contained in $U$, for all $d<{b}$ we have that $\rho$ belongs to $C(U)$ by the definition of maximal module. Otherwise, we can take the smallest divisor $d$ of $b$ for which $\sigma\subseteq{U}$ and for any other divisor $d'$ of $d$ different from $d$, we have that $\sigma'\not\subseteq{U}$. Thus, $\sigma$ is a maximal module in $U$ and $\rho\subseteq\sigma$. Observe that we can take $d=1$ as the smallest divisor of $b$ satisfying this condition if none of the others do. The converse is clear because $\rho\subseteq\sigma$ and $\sigma\subseteq{U}$ by definition.
\end{proof}

A union of modules coincides with the union of its maximal modules, and in particular, the smallest representation consists of maximal modules.
\begin{proposition}\label{pro:smallest-representation-maximal-modules}
The smallest representation of $U$ consists of maximal modules.
\end{proposition}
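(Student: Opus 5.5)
We argue by an exchange argument on an arbitrary smallest representation. Let $U=\cup_{i=1}^{k}\rho_i$ be a smallest representation, each $\rho_i=\rho(a_i,b_i)$ written with a single period and its residue as representative. This is legitimate: by the discussion after Lemma~\ref{lem:every-element-is-a-representative}, replacing several periods by their $\gcd$ and the representative by the residue never increases the size. The goal is to show that every $\rho_i$ lies in $C(U)$. If some $\rho_i$ were not maximal, we would replace it by a maximal module containing it and obtain a representation that is strictly smaller, or at least no larger and with fewer modules. This contradicts minimality once the meaning of \emph{smallest} is fixed.

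\textbf{Step 1.} Since $\rho_i\subseteq U$, Proposition~\ref{pro:module-contained-in-maximal} gives $\sigma_i=\sigma(c_i,d_i)\in C(U)$ with $\rho_i\subseteq\sigma_i$. Replacing every $\rho_i$ by $\sigma_i$ still yields a representation of $U$. Indeed, $\cup_i\sigma_i\subseteq U$ because each $\sigma_i\subseteq U$, and $U=\cup_i\rho_i\subseteq\cup_i\sigma_i$.

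\textbf{Step 2.} We compare sizes, and this is the step I expect to be the main obstacle. By Proposition~\ref{pro:congruence-class-contained-condition}, $d_i\mid b_i$, so $d_i\le b_i$ and $\mathrm{size}(d_i)\le\mathrm{size}(b_i)$. Taking $c_i$ to be the residue gives $0\le c_i<d_i\le b_i$, hence $\mathrm{size}(c_i)\le\mathrm{size}(b_i)$. The difficulty is that this does not immediately bound $\mathrm{size}(c_i)$ by $\mathrm{size}(a_i)$.

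To handle this, use Lemma~\ref{lem:every-element-is-a-representative} and Proposition~\ref{pro:congruence-class-contained-condition}: $d_i\mid(a_i-c_i)$, so $a_i\in\sigma_i$ and $a_i$ is itself a valid representative of $\sigma_i$. Moreover, $c_i=a_i\bmod d_i\le a_i$ because $a_i\ge0$. So taking either the residue or $a_i$ as representative, we get $\mathrm{size}(\sigma_i)\le\mathrm{size}(\rho_i)$ termwise, and the new representation is no larger.

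\textbf{Step 3.} We now derive the contradiction from a non-maximal $\rho_i$, that is, $\rho_i\subsetneq\sigma_i$. Then $d_i$ is a proper divisor of $b_i$, so $d_i\le b_i/2$ and $\mathrm{size}(d_i)<\mathrm{size}(b_i)$, except possibly in rounding corner cases of $\lceil\log(\cdot+1)\rceil$. If rounding prevents a strict decrease, I would fall back on a tie-breaking convention: among representations of minimal size, take one with fewest modules and then smallest periods lexicographically. Replacing $\rho_i$ by $\sigma_i$ strictly decreases the periods, so such a smallest representation must consist of maximal modules.

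One should also handle duplicates. If two $\rho_i$ map to the same $\sigma$, merging them strictly reduces the size. The conventional concepts (the empty set, and singletons, which correspond to $b=0$) are treated separately. A singleton $\{x\}\subseteq U$ that is not maximal lies in some $\sigma$ of nonzero period. Replacing it either removes it entirely, if it is covered by another term, or substitutes a module whose size is comparable; this is handled by the same tie-breaking.
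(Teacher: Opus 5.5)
Your argument is the paper's own exchange argument: a non-maximal term $\rho(a,b)$ of a smallest representation lies in some $\sigma(c,d)\in C(U)$ by Proposition~\ref{pro:module-contained-in-maximal}, and $d\mid b$, $d<b$, $c\le a$ make the swapped representation strictly smaller. Both of your hedges can be removed. First, with $\log=\log_2$ and $m=\lceil\log_2(b+1)\rceil$ (so $2^{m-1}\le b\le 2^m-1$), the bound $d\le b/2$ gives $d+1\le 2^{m-1}$, hence $\mathrm{size}(d)<\mathrm{size}(b)$ always, and the tie-breaking fallback (which would only prove the claim for one particular smallest representation) is unnecessary. Second, a non-maximal singleton term $\{x\}\subsetneq\rho(x,d)\subseteq U$ never needs your ``comparable size'' branch: let $L$ be the lcm of the positive periods and choose $m$ so that $x+dLm$ avoids the finitely many singleton terms; then some term of positive period contains $x+dLm$, hence contains $x$, so $\{x\}$ can simply be deleted. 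This is a case the paper's proof does not address, since its step $d<b$ fails for $b=0$.
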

\begin{proof}
We have that $U\subseteq\cup\{\sigma~|~\sigma\in{C(U)}\}$ because, by Proposition~\ref{pro:module-contained-in-maximal}, any module $\rho_i$ that forms part of the representation of $U$ is contained in at least one maximal module. Finally, by the definition of maximal module of $U$, it is clear that $U\supseteq\cup\{\sigma~|~\sigma\in{C(U)}\}$. Thus, $U=\cup\{\sigma~|~\sigma\in{C(U)}\}$, so the union of maximal modules is a representation of $U$. Now, suppose that we found a minimal-size representation that has a module $\rho$ that is not maximal. By Proposition~\ref{pro:module-contained-in-maximal}, we have that it is contained in a maximal module $\sigma$. Since $\rho$ is not maximal, we have that $\rho\subsetneq\sigma$. It follows from Proposition~\ref{pro:congruence-class-contained-condition} that $d\mid{b}$, and $d<b$ since $\sigma\neq\rho$. The residue $c$ of $\sigma$ is less than or equal to the residue $a$ of $\rho$. Therefore, the size of $\sigma$ is strictly less than the size of $\rho$, and replacing $\rho$ with $\sigma$ the result is a representation of $U$ with a smaller size. We obtain a contradiction to the assumption that there was a non-maximal module in the smallest representation, so it consists only of maximal modules.
\end{proof}

\subsection{Union of intersecting affine modules}

We say that a representation $\cup_{i=1}^k\rho_i$ of a union of modules is a \emph{union of intersecting modules} if $\cap_{i=1}^k\rho_i\neq\varnothing$. We use $U_*$ to denote a union of modules admitting such a representation and call $\cap_{i=1}^k\rho_i$ the \emph{set of common elements}. By the generalization of the Chinese remainder theorem to non-coprime moduli~\cite[Theorem~3.12]{jones1998elementary}, this set is itself an affine module. Every maximal module of a union of intersecting modules contains the set of common elements.
\begin{proposition}\label{pro:all-contain-q}
Let $U_*=\cup_{i=1}^k\rho_i$ such that $\cap_{i=1}^k\rho_i\neq\varnothing$. Then for all $\sigma\in{C(U_*)}$, it holds that $\cap_{i=1}^k\rho_i\subseteq\sigma$.
\end{proposition}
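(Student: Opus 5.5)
Let $Q=\cap_{i=1}^k\rho_i$. By the generalized Chinese remainder theorem, $Q$ is a module; write $Q=\rho(q,g)$ with $g=\operatorname{lcm}(b_1,\dots,b_k)$ and $q\in Q$. Put $U_*$ in normal form with $h=\operatorname{lcm}(b_1,\dots,b_k)=g$, so $U_*=\rho(A,h)$ as in Proposition~\ref{pro:normal-form}. The proof then reduces to two claims: every maximal module $\sigma(c,d)\in C(U_*)$ contains $q$; and $d\mid h=g$. The second is Proposition~\ref{pro:maximal-congruence-classes-cover-residues}. Together they give $Q=\rho(q,g)\subseteq\sigma(q,d)=\sigma$ by Proposition~\ref{pro:congruence-class-contained-condition} and Lemma~\ref{lem:every-element-is-a-representative}.

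The main step is to show that $q\in\sigma$. I argue by maximality. Suppose $q\notin\sigma$, and let $\rho'=\rho(c,\gcd(d,q-c))$, the module generated by $\sigma$ together with $q$. Then $\sigma\subsetneq\rho'$, because $d\nmid(q-c)$ means $\gcd(d,q-c)<d$. It remains to show $\rho'\subseteq U_*$, which contradicts maximality. Every element of $\rho'$ has the form $x=c+d\alpha+(q-c)\beta$ with $\alpha,\beta\in\mathbb{Z}$. Since $d\mid h$, I can reduce modulo $h$ and argue about residues in $A$.

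The key observation is that the map $y\mapsto q+t(y-q)$ sends each $\rho_i$ into itself for every $t\in\mathbb{Z}$, since $b_i\mid (y-q)$. Hence each $\rho_i$ is closed under affine combinations $\lambda y+(1-\lambda)q$. I would like $U_*$ to be closed under the operation $(y,z)\mapsto y+z-q$ whenever $y,z$ lie in the same $\rho_i$. Instead of pursuing that, I use a cleaner route: write $x=c'+(q-c)\beta$ with $c'=c+d\alpha\in\sigma$. The element $c'$ lies in some $\rho_i$. Then $x-q=(c'-q)(1-\beta)+d\alpha\beta$, so I need $x\in\rho_i$ modulo $b_i$. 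We have $b_i\mid(c'-q)$, so it suffices that $b_i\mid d\alpha\beta$. That condition fails in general, so this direct attempt has a gap.

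The fix is to pick $i$ depending on $x$ rather than on $c'$. Set $y=c+(1-\beta)(d\alpha)\cdot$ (an appropriate adjustment); concretely, I note that $x=q+(1-\beta)(c-q)+d\alpha$. The point $z=c+d\alpha'$ with $\alpha'\equiv\alpha(1-\beta)^{-1}$ is not generally available, so I instead proceed residue-wise. For each residue class $r$ of $\rho'$ modulo $h$, find some $c''\in\sigma$ and an integer $t$ with $q+t(c''-q)\equiv r\pmod h$. Then $r$ lies in the same $\rho_i$ as $c''$, by the closure observation. Showing that every residue of $\rho'$ arises this way is the main obstacle. If it fails in general, my fallback is to show that $\sigma\cup\{q+t(y-q)\}$ generates $\rho'$ within $U_*$ by iterating the closure observation, taking $t=1-\beta$ and $y$ ranging over $\sigma$.
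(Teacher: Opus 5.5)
Your reduction is sound: once a single common element $q$ lies in $\sigma(c,d)$, the fact $d\mid h=\mathrm{lcm}(b_1,\dots,b_k)$ gives $\cap_i\rho_i=\rho(q,h)\subseteq\sigma(q,d)$. The gap is that the step carrying all the content, $q\in\sigma$ (equivalently $\rho'=\rho(c,\gcd(d,q-c))\subseteq U_*$), is never proved. Your first attempt fails, as you note yourself, and the residue-wise claim is left as ``the main obstacle''. The fallback does not close it either. The closure observation only moves a point $y\in\sigma$ along the line $q+(y-q)\mathbb{Z}$ inside the single $\rho_i$ containing $y$, and $U_*$ is not closed under any operation that mixes points of different $\rho_i$. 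Iterating therefore produces nothing beyond these lines, and whether they cover $\rho'$ modulo $h$ is exactly the open claim. They do not cover $\rho'$ literally, so the reduction modulo $h$ is essential.

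The missing idea is a specific choice of point in $\sigma$, and with it your route goes through. Write $g'=\gcd(d,q-c)$, $c-q=g'u$ and $d=g'v$, so that $\gcd(u,v)=1$. By the Chinese remainder theorem over the primes $p\mid h$, there is an $\alpha$ with $\gcd(u+v\alpha,h)=1$. If $p\mid v$ then $p\nmid u$, so any $\alpha$ works; if $p\nmid v$, avoid one class of $\alpha$ modulo $p$. Then $c''=c+d\alpha\in\sigma\subseteq U_*$ lies in some $\rho_i$. For $x\in\rho'$, write $x-q=g'm$ and choose $t$ with $t(u+v\alpha)\equiv m\pmod h$. Then $x\equiv q+t(c''-q)\pmod h$, hence modulo $b_i$, so $x\in\rho_i$. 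Thus $\sigma\subseteq\rho'\subseteq\rho_i\subseteq U_*$, and maximality forces $\sigma=\rho_i\supseteq\cap_j\rho_j$. This needs neither a contradiction nor $d\mid h$.

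The paper argues by divisibility instead. It keeps only the modules $\rho_{i_1},\dots,\rho_{i_s}$ that meet $\sigma$. Since $\sigma$ is maximal in their union, $d\mid\mathrm{lcm}(b_{i_1},\dots,b_{i_s})$. It then picks $x_j\in\sigma\cap\rho_{i_j}$ and sets $c=x_1$. Each $\gcd(b_{i_j},d)$ divides $c-q=(c-x_j)+(x_j-q)$. Hence their lcm, which equals $\gcd(\mathrm{lcm}(b_{i_1},\dots,b_{i_s}),d)=d$, also divides $c-q$, so $q\in\sigma$.
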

\begin{proof}
We take an arbitrary maximal module $\sigma(c,d)$ of $U_*$ and an arbitrary common element $q$ and show that $q\in\sigma$. By definition of maximality, we have $\sigma\subseteq{U_*}$. If $\sigma=\varnothing$, then $\sigma\subsetneq\rho_i$ for every $i=1,\ldots,k$, contradicting the maximality of $\sigma$. Thus, $\sigma\neq\varnothing$. Since $\sigma \subseteq U_*=\bigcup_{i=1}^k\rho_i$, it follows that there must exist at least one module $\rho_i$ such that $\sigma\cap\rho_i\neq\varnothing$.
Let $\{\rho_{i_1},\ldots,\rho_{i_s}\} \subseteq \{\rho_1,\ldots,\rho_k\}$ be the subfamily of modules that intersect $\sigma$. Thus, $\sigma$ is contained in the union of $\rho_{i_1},\ldots,\rho_{i_s}$, and it is also a maximal module in this union. By Proposition~\ref{pro:normal-form} the union of $\rho_{i_1},\ldots,\rho_{i_s}$ can be written in normal form, and we can use the procedure described to obtain a representation $\rho(A,h)$ where $h=\mathrm{lcm}(b_{i_1},\dots,b_{i_s})$. It holds that $d\mid{h}$ by Proposition~\ref{pro:maximal-congruence-classes-cover-residues}.

We choose an element $x_j\in\sigma\cap\rho_{i_j}$ for each $j=1,\cdots,s$. Since $q\in\rho_i$ for all $i=1,\ldots,k$, we have that $x_j=q+b_{i_j}\alpha_j$ for all $j=1\ldots,s$ and some $\alpha_j\in\mathbb{Z}$ by Lemma~\ref{lem:every-element-is-a-representative}. Without loss of generality, we take $c=x_{1}$ as the representative of $\sigma$. Now, the equalities $x_{j}-x_{1}=d\alpha'_{j}$ where $\alpha'_j\in\mathbb{Z}$ are taken into consideration. We observe these equalities $b_{i_1}\alpha_1=b_{i_j}\alpha_j-d\alpha'_j$ for all $j=2,\ldots,s$ and prove that $d$ divides $b_{i_1}\alpha_1$. It is clear that $b_{i_1}\alpha_1$ is divisible by $\gcd(b_{i_1},d)$ because it divides $b_{i_1}$. Moreover, the other side of the equalities ensures that $b_{i_1}\alpha_1$ is also divisible by $\gcd(b_{i_j},d)$ for every $j=2,\ldots,s$ because it divides $b_{i_j}$ and $d$. Thus, $b_{i_1}\alpha_1$ is divisible by $\text{lcm}(\gcd(b_{i_1},d),\ldots,\gcd(b_{i_s},d))=\gcd(\text{lcm}(b_{i_1},\cdots,b_{i_s}),d)$ which is equal $d$ because $d\mid\text{lcm}(b_{i_1},\cdots,b_{i_s})$. Therefore, there is $\alpha^{\prime}\in\mathbb{Z}$ such that $b_{i_1}\alpha_1=d\alpha^{\prime}$. Since $c=q+b_{i_1}\alpha_1$, we have that $q+b_{i_1}\alpha_1-d\alpha^{\prime}=q\in\sigma$.
\end{proof}

Every maximal module $\sigma$ of $U_*$ contains points that belong only to $\sigma$. Thus, no maximal module in a union of intersecting modules is redundant.

\begin{proposition}\label{pro:only-belong-to-a-single-maximal}
For every $\sigma\in{C(U_*)}$ there exists $x\in\sigma$ such that $x\not\in\sigma'$ for every $\sigma'\in{C(U_*)}$ with $\sigma'\neq\sigma$.
\end{proposition}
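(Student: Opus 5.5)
The plan is to reduce everything to a single common point and then use the explicit containment criterion of Proposition~\ref{pro:congruence-class-contained-condition}. Fix a representation $U_*=\cup_{i=1}^k\rho_i$ with $\cap_{i=1}^k\rho_i\neq\varnothing$, and pick a common element $q$. By Proposition~\ref{pro:all-contain-q}, every maximal module $\sigma\in C(U_*)$ contains $q$. By Lemma~\ref{lem:every-element-is-a-representative}, we may therefore write every maximal module with representative $q$, that is, $\sigma=\sigma(q,d)$ and $\sigma'=\sigma(q,d')$. Two maximal modules of this form differ only in their periods.

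Given $\sigma=\sigma(q,d)$, I will take the witness $x:=q+d$; if $d=0$ this is just $x=q$. Clearly $x\in\sigma$. Now let $\sigma'=\sigma(q,d')\in C(U_*)$ with $\sigma'\neq\sigma$, and suppose for contradiction that $x\in\sigma'$. By Corollary~\ref{cor:membership-eqiuvalent-to-subset}, applied with representative $q$, the condition $q+d\in\sigma(q,d')$ is equivalent to $\sigma(q,d)\subseteq\sigma(q,d')$. Alternatively, by Lemma~\ref{lem:element-contained-in-cartesian-congruence-class} it means $d'\mid d$, and then Proposition~\ref{pro:congruence-class-contained-condition} gives the same inclusion, since $d'\mid(q-q)$. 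Either way we get $\sigma\subseteq\sigma'$ with $\sigma\neq\sigma'$, so $\sigma\subsetneq\sigma'\subseteq U_*$. This contradicts the maximality of $\sigma$. Hence $x\notin\sigma'$ for every maximal $\sigma'\neq\sigma$.

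The only step needing care is the degenerate period $d=0$, where $\sigma=\{q\}$. In this case $x=q$ lies in every maximal module, so the argument above cannot be used directly. Instead I would argue that $\{q\}$ can be maximal only if it is the sole element of $C(U_*)$. Indeed, any other $\sigma'\in C(U_*)$ contains $q$ by Proposition~\ref{pro:all-contain-q}, so $\{q\}\subsetneq\sigma'$, contradicting maximality. In that case the statement holds vacuously with $x=q$.

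The empty set cannot occur among the maximal modules, as was shown inside the proof of Proposition~\ref{pro:all-contain-q}. The main conceptual input is Proposition~\ref{pro:all-contain-q}: it lets all maximal modules share a representative, which turns "private point" into a pure divisibility statement about periods.
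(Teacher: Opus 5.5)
Your proposal is correct and follows essentially the same route as the paper. You fix a common element $q$, use Proposition~\ref{pro:all-contain-q} to write each maximal module as $\sigma(q,d)$, take $q+d$ as the private point, and derive $\sigma\subsetneq\sigma'$ via Corollary~\ref{cor:membership-eqiuvalent-to-subset} to contradict maximality. Your separate treatment of $d=0$ is harmless but unnecessary, since the same contradiction argument already shows that no other maximal module can exist in that case.
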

\begin{proof}
We have that every maximal module of $U_*$ contains the common elements by Proposition~\ref{pro:all-contain-q}. We take $q$ a common element, and we want to prove that given a maximal module $\sigma(q,d)$, the point $q+d$ only belongs to $\sigma$. If $q+d$ were contained in a different maximal module $\sigma(q,d')$, then $\sigma\subsetneq\sigma'$ by Corollary~\ref{cor:membership-eqiuvalent-to-subset}, which contradicts the fact that $\sigma$ is maximal. Therefore, $\sigma$ has a point that only belongs to $\sigma$ and none of the other maximal modules.
\end{proof}

The smallest representation of $U_*$ can be obtained trivially from the set of maximal modules in $U_*$.

\begin{proposition}\label{pro:smallest-representation-made-of-maximal-cc}
The smallest representation of $U_*$ is given by $C(U_*)$.
\end{proposition}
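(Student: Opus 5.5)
By Proposition~\ref{pro:smallest-representation-maximal-modules}, the smallest representation of $U_*$ consists only of maximal modules. It is therefore a subset $S\subseteq C(U_*)$ whose union is $U_*$. The plan is to show that $S$ cannot be a proper subset of $C(U_*)$. We also need that $C(U_*)$ itself is a representation, which is already established in the proof of Proposition~\ref{pro:smallest-representation-maximal-modules}: there we showed $U=\cup\{\sigma\mid\sigma\in C(U)\}$.

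First we check that $C(U_*)$ is finite, so that it is a legitimate finite representation. Write $U_*$ in normal form $\rho(A,h)$ using Proposition~\ref{pro:normal-form}. By Proposition~\ref{pro:maximal-congruence-classes-cover-residues}, every $\sigma(c,d)\in C(U_*)$ has $d\mid h$, and we may take $c\in[0,d-1]$ as its residue. Hence there are only finitely many candidate pairs $(c,d)$, so $C(U_*)$ is finite.

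Next, we show every $\sigma\in C(U_*)$ must appear in $S$. Fix $\sigma$. By Proposition~\ref{pro:only-belong-to-a-single-maximal} there is a point $x\in\sigma$ with $x\notin\sigma'$ for every $\sigma'\in C(U_*)$ with $\sigma'\neq\sigma$. Since $x\in U_*=\bigcup S$ and $S\subseteq C(U_*)$, the module of $S$ containing $x$ must be $\sigma$ itself. Thus $C(U_*)\subseteq S$, and so $S=C(U_*)$. Since every module is taken in its residue form with a single period, which is the minimal-size encoding of each module, the smallest representation is exactly $C(U_*)$.

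The only delicate point is the identification of a representation with a set of modules. Distinct pairs (residue, period) give distinct modules, so the representation is unique once each module is written with its residue and a single period, and no duplicates are needed. I expect the main obstacle to be making sure Proposition~\ref{pro:only-belong-to-a-single-maximal} applies to every maximal module, including the degenerate cases of the empty set and singletons. The proof of Proposition~\ref{pro:all-contain-q} already rules out $\sigma=\varnothing$. Singletons correspond to period $0$ and are handled by the same argument with $q+d=q$, where the point $q$ must lie only in $\sigma$. If needed, I would note that a singleton can be maximal only when $U_*$ is itself that singleton, in which case the statement is trivial.
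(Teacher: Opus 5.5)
Your proposal is correct and follows the paper's argument. By Proposition~\ref{pro:smallest-representation-maximal-modules}, the smallest representation is a subfamily $S\subseteq C(U_*)$, and the private point from Proposition~\ref{pro:only-belong-to-a-single-maximal} forces every $\sigma\in C(U_*)$ into $S$. Your finiteness check and degenerate-case remarks are sound but unnecessary, since $C(U_*)\subseteq S$ with $S$ finite already yields finiteness.
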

\begin{proof}
We have that the smallest representation of $U_*$ consists of maximal modules by Proposition~\ref{pro:smallest-representation-maximal-modules}, and by Proposition~\ref{pro:only-belong-to-a-single-maximal} every maximal module of $U_*$ contains points that do not belong to the other maximal modules. Hence, all maximal modules necessarily appear in the smallest representation.
\end{proof}

\section{Learning algorithm}\label{sec:learning-algorithm}

We present an algorithm to learn a union of modules using equivalence and subset queries. We begin with the case of a single module and prove the correctness of the corresponding learning procedure. This procedure can be derived from the online algorithm of \cite{helmbold1992learning} when restricting it to the one-dimensional case and accounting for the connection between online mistake bounds and query complexity in \cite{littlestone1988learning,angluin2004queries}. For completeness, we include an adaptation that will be used later.

\subsection{Learning a single affine module} \label{subsec:one}

Algorithm~\ref{alg:single-module} shows the process of learning a single module $\rho$. Initially, the hypothesis is $\rho^0=\varnothing$. The first counterexample returned by the equivalence query $e$ is used as the representative $a=x_1$, and the period is set to $b_1=0$, so $\rho^1=\rho(x_1,0)$. After the second counterexample, the first non-zero period $b_2$ is obtained and subsequently modified with the incoming counterexamples. Thus, $\rho^n=\rho(a,b_n)$ where $b_n=\gcd(0,x_2-x_1,\ldots,x_n-x_1)$ for every $n\geq{2}$. Each counterexample satisfies $x_{n}\in{\rho\setminus\rho^{n-1}}$. Here, we use $\gcd(0,x)=\left|x\right|$. When learning a union of modules, the modules in the hypothesis are treated analogously.

\begin{algorithm}[hbt!]
\caption{Learning a module $\rho$.}\label{alg:single-module}
$a=e(\varnothing)$\; $b=0$\; $\rho = \rho(a, 0)$\;
\While{$(x=e(\rho))$}{
    $b=\gcd(x-a,b)$\; $\rho = \rho(a, b)$\;
}
\Return{$\rho$\;}
\end{algorithm}

The hypothesis $\rho^n$ denotes the module obtained after receiving $n$ counterexamples, where the superscript $n$ indicates the $n$th module in a sequence generated as above. It is equal to the target module $\rho$ after at most polynomially many counterexamples in the size of the largest counterexample $x_{\ell}$.

\begin{proposition}\label{pro:convergence-of-module}
There exists $n\leq\log_2(2\left|x_{\ell}\right|)+2$ such that $\rho^n=\rho$. Moreover, $\rho^0\subset\ldots\subset\rho^{n-1}\subset\rho^{n}$.
\end{proposition}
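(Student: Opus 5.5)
The argument has two parts: every hypothesis stays inside the target and strictly grows, and the period strictly decreases under divisibility, so it can change only logarithmically often.

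\emph{Invariant.} By induction on $n$, I will show that $\rho^{n}\subseteq\rho$ and that, for $n\ge 2$, $b_{n}$ is a positive multiple of $b$ (the target period).
\begin{itemize}
\item For $n=1$: $\rho^1=\{x_1\}\subseteq\rho$, since $x_1$ is a positive counterexample.
\item For $n\ge 2$: each $x_j$ lies in $\rho$, so $b\mid (x_j-x_1)$ by Lemma~\ref{lem:element-contained-in-cartesian-congruence-class}. Hence $b\mid b_n=\gcd(0,x_2-x_1,\dots,x_n-x_1)$.
\item Proposition~\ref{pro:congruence-class-contained-condition} with $a=x_1\in\rho$ then gives $\rho^n=\rho(x_1,b_n)\subseteq\rho$.
\end{itemize}
Because every hypothesis is contained in the target, every counterexample is positive and satisfies $x_n\in\rho\setminus\rho^{n-1}$.

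\emph{Strict growth.} Since $b_n\mid b_{n-1}$, Proposition~\ref{pro:congruence-class-contained-condition} gives $\rho^{n-1}\subseteq\rho^n$. The inclusion is strict because $x_n\in\rho^n\setminus\rho^{n-1}$. The first steps $\varnothing\subset\{x_1\}\subset\rho(x_1,b_2)$ are clear, using $x_2\neq x_1$ so that $b_2=|x_2-x_1|>0$.

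\emph{Counting.} For $n\ge 3$, $x_n\notin\rho^{n-1}$ means $b_{n-1}\nmid(x_n-x_1)$. So $b_n=\gcd(b_{n-1},x_n-x_1)$ is a proper divisor of $b_{n-1}$, which gives $b_n\le b_{n-1}/2$. Iterating,
\[
b\le b_n\le b_2/2^{\,n-2}\le 2|x_\ell|/2^{\,n-2},
\]
where $b_2=|x_2-x_1|\le 2|x_\ell|$ and $b\ge 1$. This yields $n-2\le\log_2(2|x_\ell|)$.

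\emph{Termination and equality.} The process can only stop when the equivalence query returns $\top$, i.e.\ $\rho^n=\rho$. The bound above caps the number of counterexamples, so this happens for some $n\le\log_2(2|x_\ell|)+2$.

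Two degenerate cases need separate handling:
\begin{itemize}
\item If $\rho=\varnothing$, the answer is $\top$ at $n=0$.
\item If $\rho$ is a singleton (the period-$0$ convention), it is learned at $n=1$.
\end{itemize}
In the singleton case the invariant must be read with $b=0$, so that no second counterexample exists. I will state this caveat separately rather than force it into the divisibility argument.

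\emph{Main obstacle.} The main difficulty is the bookkeeping around the bound and the edge cases. One must verify that $b_2\le 2|x_\ell|$; this holds because $|x_2-x_1|\le|x_2|+|x_1|$. One must also check that the halving argument starts at $n=3$, which accounts for the additive $+2$.
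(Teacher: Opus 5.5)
Your proof is correct and follows essentially the same route as the paper. In both, counterexamples are positive, each new period $b_n$ is a proper divisor of $b_{n-1}$ and a multiple of $b$, the bound comes from counting how often this can happen, and the nesting follows from Proposition~\ref{pro:congruence-class-contained-condition}. Your estimate $b_n\le b_{n-1}/2$ is the same as the paper's count of prime factors of $b_2/b$, and you additionally spell out the invariant $\rho^n\subseteq\rho$ and the empty and singleton cases, which the paper takes for granted.
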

\begin{proof}
Since $x_n\in\rho\setminus\rho^{n-1}$ and the first non-zero period is $b_2=x_2-x_1$ we have that $b\mid(x_n-a)$, but $b_{n-1}\nmid(x_n-a)$ by Lemma~\ref{lem:element-contained-in-cartesian-congruence-class} for $n\geq{3}$. Consequently, for $n\geq{3}$ the new period $b_{n}=\gcd(b_{n-1},x_n-a)$ is a multiple of $b$ which decreases at least by one prime factor at every iteration. Since $\frac{b_2}{b}$ has at most $\log_2(\frac{b_2}{b})$ prime factors, there exists $n\leq\log_2(\frac{b_2}{b})+2$ such that $b_n=b$, so $\rho^n=\rho$. This upper bound is tight. We consider an adversarial strategy in which the adversary always returns a counterexample that reduces the period by exactly one prime factor. This achieves the stated upper bound, and such a choice of counterexample is always possible.

Finally, we observe that $n\leq\log_2\left(b_2\right)+2\leq\log_2(2\left|x_{\ell}\right|)+2$, and by Proposition~\ref{pro:congruence-class-contained-condition} for all $n'>{n}$, occurs that $\rho_{n'}\subset\rho_{n}$ since $b_n\mid{b_{n'}}$ and $b_{n'}\neq{b_n}$.
\end{proof}

Proposition~\ref{pro:convergence-of-module} provides a polynomial upper bound on the number of counterexamples required to learn a module, which corresponds to the number of iterations of the while loop in Algorithm~\ref{alg:single-module}. The loop consists only of a $\gcd$ operation, which can be computed in polynomial time in the size of the largest counterexample.
\begin{corollary}\label{lem:single-arithmetic-progression}
$\mathcal{C}_M$ is efficiently exactly learnable with equivalence queries.
\end{corollary}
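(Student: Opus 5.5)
The plan is to run Algorithm~\ref{alg:single-module} and verify the three ingredients of efficient exact learnability from Section~\ref{sec:exact-learning-framework}. These are correctness of the output, a polynomial bound on the number of equivalence queries, and polynomial cost per iteration.

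\textbf{Correctness.} First I check the invariant that every hypothesis satisfies $\rho^n\subseteq\rho$. This is what guarantees that every counterexample is positive, so that $x_n\in\rho\setminus\rho^{n-1}$.
\begin{itemize}
\item The base case $\rho^1=\rho(x_1,0)=\{x_1\}\subseteq\rho$ holds because $x_1$ is a positive counterexample against $\varnothing$.
\item For the inductive step, $a=x_1$ and $x_n$ both lie in $\rho$. Lemma~\ref{lem:element-contained-in-cartesian-congruence-class} then gives $b\mid(x_n-a)$. Together with the hypothesis $b\mid b_{n-1}$, this yields $b\mid\gcd(b_{n-1},x_n-a)=b_n$.
\item Proposition~\ref{pro:congruence-class-contained-condition} then gives $\rho(a,b_n)\subseteq\rho$.
\end{itemize}
So the loop only stops when $e(\rho^n)=\top$, and the returned hypothesis equals the target.

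\textbf{Termination.} I invoke Proposition~\ref{pro:convergence-of-module} directly. It says that after at most $\log_2(2|x_\ell|)+2$ counterexamples the hypothesis equals $\rho$, and the next equivalence query returns $\top$. The total number of equivalence queries is therefore $O(\log|x_\ell|)$. This is linear in size$(x_\ell)$.

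\textbf{Cost per iteration.} Each iteration performs one subtraction and one $\gcd$ on integers of absolute value at most $2|x_\ell|$. Note that $b_n\le|b_2|\le2|x_\ell|$, so all numbers involved have bit length $O(\text{size}(x_\ell))$. The Euclidean algorithm runs in time polynomial in the bit length, so each iteration is polynomial. Multiplying the per-iteration cost by the iteration bound gives total time polynomial in size$(x_\ell)$, and hence also in the size of the smallest representation of $C$.

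\textbf{Edge cases.} Two degenerate concepts need a separate remark.
\begin{itemize}
\item If $C=\varnothing$, the first query $e(\varnothing)$ already returns $\top$. The algorithm should then return $\varnothing$, which needs a trivial guard in the pseudocode.
\item If $C$ is a singleton, the algorithm stops with $\rho(a,0)$.
\end{itemize}

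The main obstacle I expect is purely a bookkeeping one: making sure the iteration bound is stated in terms of the largest counterexample as the definition requires. Proposition~\ref{pro:convergence-of-module} already handles this, so no new arithmetic is needed.
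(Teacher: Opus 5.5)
Your proposal is correct and follows the paper's argument: the iteration bound comes from Proposition~\ref{pro:convergence-of-module}, and each loop iteration costs only a $\gcd$, which is polynomial in the size of the largest counterexample. Your explicit invariant $\rho^n\subseteq\rho$, which guarantees that every counterexample is positive, and your guard for $C=\varnothing$ are details the paper leaves implicit, and they are worth stating.
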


\subsection{Learning a union of affine modules}

\subsubsection{Algorithm description}

To learn a union of modules, Algorithm~\ref{alg:union-of-main} refines the hypothesis upon receiving each counterexample from the equivalence query $e$. The variable $H$ is used to construct the hypothesis of the algorithm. It consists of tuples $(a,b)$, each representing a module $\rho(a,b)$.

\SetKwProg{Fn}{Procedure}{}{}
\SetKwFunction{FProcUH}{RefineHypothesis}
\SetKwFunction{FProcMCM}{RefineModules}

\begin{algorithm}[hbt!]
\caption{Learning a union of modules $U$.}\label{alg:union-of-main}

$H=\left\{\right\}$\;

\While{$(x=e(\cup\{\rho(a, b)~|~(a,b)\in{H}\}))$}{
$H=\FProcUH{H,x}$\;
}
\Return{$H$\;}
\end{algorithm}

If no module in the hypothesis is refined by the current counterexample $x$, the tuple $(x,0)$ is added to $H$ in Algorithm~\ref{alg:update-hypothesis}.

\begin{algorithm}[hbt!]
\caption{Hypothesis refinement.}\label{alg:update-hypothesis}
\Fn{\FProcUH($H,x$)}{

$H'=\FProcMCM{H,x}$

\If{$H'={H}$}{\Return $H\cup\{(x,0)\}$}
\Return $H'$
}
\end{algorithm}

In Algorithm~\ref{alg:update-suitable-group}, for each module $\rho(a,b)$ in the hypothesis, we check whether $\rho(a,\gcd(b,x-a))$ is contained in $U$ and, in that case, replace $\rho(a,b)$ with $\rho(a,\gcd(b,x-a))$.

\begin{algorithm}[hbt!]
\caption{Module refinement.}\label{alg:update-suitable-group}
\Fn{\FProcMCM($H,x$)}{

$H'=H$

\For{$(a,b)\in{H}$}{
	\If{$s(\rho(a,\gcd\left(b,x-a\right)))$}{
	    $H'=\left(H'\setminus\{(a,b)\}\right)\cup\{(a,\gcd\left(b,x-a\right))\}$
	}
}
\Return $H'$
}
\end{algorithm}

\subsubsection{Algorithm correctness}

We introduce some notation for the correctness proof and present the algorithm recursively. We denote by $\rho^r$ a module in the union of modules of the hypothesis, constructed as in Section~\ref{subsec:one}. We say that $\rho^{r}$ is \emph{refined} by a counterexample $x$ if the algorithm replaces $\rho^{r}=\rho(a,b_{r})$ with $\rho(a,b_{r+1})$ where $b_{r+1}=\gcd(b_{r},x-a)$. In this case, we write $\rho^{r+1}=\rho(a,b_{r+1})$. This occurs in the procedure $\FProcMCM$ whenever $\rho(a,b_{r+1})$ is contained in $U$.

Initially, the hypothesis is $H^0=\varnothing$ and $k_0=0$. Let
\[
H^n=\cup_{i=1}^{k_n}\rho_i^{r_n(i)}
\]
where $n$ is the number of counterexamples received and $k_n$ is the number of modules in the hypothesis after processing the $n$th counterexample. The exponent $r_n(i)$ denotes the number of refinements of the $i$th module after processing $n$ counterexamples.

Given a counterexample $x_{n+1}\in{U\setminus{H^n}}$ there are two cases for constructing $H^{n+1}$.

\textbf{Case 1:} No module in $H^n$ is refined by $x_{n+1}$. Then
\[
H^{n+1}=H^n\cup \rho(x_{n+1},0),
\]
so $k_{n+1}=k_n+1$ and the new module is initialized with $r_{n+1}(k_{n+1})=1$, as in $\FProcUH$.

\textbf{Case 2:} At least one module in $H^n$ is refined by $x_{n+1}$ in $\FProcMCM$. Then
\[
H^{n+1}=\cup_{i=1}^{k_n}\rho_i^{r_{n+1}(i)},
\]
where
\[
\rho_i^{r_{n+1}(i)}=\rho(a_i,b_{i,r_{n+1}(i)}),
\]
with $b_{i,r_{n+1}(i)}=\gcd(b_{i,r_{n}(i)},x_{n+1}-a_i)$
and $r_{n+1}(i)=r_n(i)+1$ if the $i$th module is refined, and $\rho_i^{r_{n+1}(i)}=\rho_i^{r_{n}(i)}$ with  $r_{n+1}(i)=r_n(i)$ otherwise. In this case, $k_{n+1}=k_n$.

At each iteration
of Algorithm~\ref{alg:union-of-main}, the hypothesis is contained in the target concept $U$.
\begin{proposition}\label{pro:hypothesis-contained}
$H^n\subseteq{U}$ for all $n\in\mathbb{N}$.
\end{proposition}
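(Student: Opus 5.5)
We argue by induction on $n$, the number of counterexamples processed, that every module $\rho_i^{r_n(i)}$ appearing in $H^n$ is contained in $U$; since $H^n$ is the union of these modules, $H^n\subseteq U$ follows.

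For the base case, $H^0=\varnothing\subseteq U$ trivially. For the inductive step, assume every module of $H^n$ is contained in $U$, and let $x_{n+1}\in U\setminus H^n$ be the next counterexample. As observed in Section~\ref{sec:exact-learning-framework}, the hypothesis is always contained in the target, so only positive counterexamples occur. This is part of what we are proving, so we must be careful not to argue circularly. By the induction hypothesis $H^n\subseteq U$, so the equivalence query on $H^n$ can only return some $x_{n+1}\in U\setminus H^n$. In particular $x_{n+1}\in U$. We then split according to the two cases in the recursive description of $H^{n+1}$.

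\textbf{Case 1.} No module is refined. Then $H^{n+1}=H^n\cup\rho(x_{n+1},0)$. Since $\rho(x_{n+1},0)=\{x_{n+1}\}$, whose only element is $a+0\cdot\alpha=x_{n+1}$, and $x_{n+1}\in U$, the new module is contained in $U$. The old modules are contained in $U$ by the induction hypothesis.

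\textbf{Case 2.} Some modules are refined. Each module that is not refined is unchanged, so it lies in $U$ by the induction hypothesis. Each refined module is replaced by $\rho(a_i,\gcd(b_{i,r_n(i)},x_{n+1}-a_i))$. By the definition of $\FProcMCM$, this replacement happens only when the subset query $s$ on exactly this module answered $\top$. By the semantics of $s$, that answer means the module is contained in $U$.

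In both cases every module of $H^{n+1}$ lies in $U$, which completes the induction. The only delicate point is the order of reasoning in the inductive step: positivity of $x_{n+1}$ must be derived from the induction hypothesis $H^n\subseteq U$, and must not be assumed beforehand.
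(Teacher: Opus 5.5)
Your proof is correct and follows the paper's argument essentially step for step. You use induction on $n$, derive $x_{n+1}\in U$ from the induction hypothesis $H^n\subseteq U$ via the definition of the equivalence query, use the singleton $\rho(x_{n+1},0)=\{x_{n+1}\}$ in Case~1, and use the positive subset-query answer for each refined module in Case~2.
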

\begin{proof}
We prove this by induction. For the base case, we have that $H^0=\varnothing$ is contained in $U$. Suppose that $H^n\subseteq{U}$, we prove that $H^{n+1}\subseteq{U}$ after processing the counterexample $x_{n+1}$. There are two cases. In the first case, none of the modules in $H^n$ is refined by $x_{n+1}$ and $H^{n+1}=H^n\cup\{\rho(x_{n+1},0)\}$. Since $H^n\subseteq{U}$, then $x_{n+1}\in{U}$ because the equivalence query returns an element of $(U\setminus{H^n})\cup(H^n\setminus{U})=(U\setminus{H^n})$. Therefore, $H^{n+1}\subseteq{U}$ because $\rho(x_{n+1},0)=\{x_{n+1}\}\subseteq{U}$. In the second case, $H^{n+1}=\cup_{i=1}^{k_n}\rho_i^{r_{n+1}(i)}$ where by construction of the algorithm $\rho_i^{r_{n+1}(i)}\subseteq{U}$ for all $i=1,\ldots,k_n$. Hence, $H^{n+1}\subseteq{U}$.
\end{proof}

We overload the notation $C(\cdot)$. Given a single module~$\rho$, $C(\rho)$ denotes the set of all maximal modules $\sigma\in{C(U)}$ such that $\rho\subseteq\sigma$. We call $C(\rho)$ the \emph{maximal covering} of $\rho$ in $U$, omitting $U$ when it is clear from the context. If a module is contained in another module, then the maximal covering of the latter is contained in the maximal module of the former.
\begin{lemma}\label{lem:set-of-modules-to-which-it-can-converge-decrease}
If $\rho\subseteq\rho'$, then $C(\rho')\subseteq{C(\rho)}$.
\end{lemma}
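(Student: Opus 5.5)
This is a direct unfolding of the definition of the maximal covering, combined with transitivity of set inclusion. We fix the ambient union $U$. Recall that $C(\rho)$ is the set of all $\sigma\in C(U)$ with $\rho\subseteq\sigma$, and similarly $C(\rho')$ is the set of all $\sigma\in C(U)$ with $\rho'\subseteq\sigma$.

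We take an arbitrary $\sigma\in C(\rho')$. By definition, $\sigma$ is a maximal module of $U$ and $\rho'\subseteq\sigma$. Since $\rho\subseteq\rho'$ by hypothesis, transitivity of inclusion gives $\rho\subseteq\sigma$. Hence $\sigma$ is a maximal module of $U$ containing $\rho$, that is, $\sigma\in C(\rho)$. As $\sigma$ was arbitrary, $C(\rho')\subseteq C(\rho)$.

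No earlier proposition is needed. The set $C(U)$ depends only on $U$ and not on $\rho$ or $\rho'$, so the maximality condition is common to both coverings, and only the containment condition has to be transferred. The only possible subtlety is the edge case $\rho'\not\subseteq U$: then no maximal module can contain $\rho'$, so $C(\rho')=\varnothing$ and the inclusion holds trivially. This case is already covered by the general argument, which never uses $\rho'\subseteq U$. I expect no genuine obstacle. In particular, $C(\rho)$ may be strictly larger than $C(\rho')$, and the statement only asserts the inclusion, so nothing further needs to be shown.
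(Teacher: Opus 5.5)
Your proof is correct and is essentially the paper's: take $\sigma\in C(\rho')$ and use $\rho\subseteq\rho'\subseteq\sigma$ to conclude $\sigma\in C(\rho)$. Your remark on the edge case $\rho'\not\subseteq U$ is accurate but not needed, since the general argument already covers it.
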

\begin{proof}
Let $\sigma\in{C(\rho')}$, by the definition of maximal covering $\rho'\subseteq\sigma$. Since $\rho\subseteq\rho'\subseteq\sigma$, we have that $\sigma\in{C(\rho)}$.
\end{proof}

A module in the hypothesis is refined by a counterexample if and only if the counterexample belongs to a module in its maximal covering.
\begin{proposition}\label{pro:condition-to-be-contained}
Given $x\in{U}$, we have $\rho(a,\gcd(b,x-a))\subseteq{U}$ if and only if there exists $\sigma\in{C(\rho)}$ such that $x\in\sigma$.
\end{proposition}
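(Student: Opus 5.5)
The plan is to treat the two directions separately. Both reduce to the observation that $\rho(a,\gcd(b,x-a))$ is the smallest module containing both $\rho=\rho(a,b)$ and the point $x$. Throughout I assume, as in the algorithm, that $\rho=\rho(a,b)\subseteq U$ is a hypothesis module and $x\in U$.

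\emph{Forward direction.} Suppose $\rho'':=\rho(a,\gcd(b,x-a))\subseteq U$. By Proposition~\ref{pro:module-contained-in-maximal} there is some $\sigma\in C(U)$ with $\rho''\subseteq\sigma$. Since $\gcd(b,x-a)\mid b$ and $\gcd(b,x-a)\mid (a-a)$, Proposition~\ref{pro:congruence-class-contained-condition} gives $\rho\subseteq\rho''$. Hence $\rho\subseteq\sigma$, so $\sigma\in C(\rho)$. Moreover $\gcd(b,x-a)\mid(x-a)$, so Lemma~\ref{lem:element-contained-in-cartesian-congruence-class} gives $x\in\rho''\subseteq\sigma$. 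This settles the forward direction.

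\emph{Converse.} Let $\sigma=\sigma(c,d)\in C(\rho)$ with $x\in\sigma$. We have $a\in\rho\subseteq\sigma$, so by Lemma~\ref{lem:every-element-is-a-representative} we may write $\sigma=\sigma(a,d)$. By Proposition~\ref{pro:congruence-class-contained-condition}, the inclusion $\rho\subseteq\sigma$ gives $d\mid b$. Since $x\in\sigma(a,d)$, Lemma~\ref{lem:element-contained-in-cartesian-congruence-class} gives $d\mid(x-a)$. Therefore $d\mid\gcd(b,x-a)$, and trivially $d\mid(a-a)$. Proposition~\ref{pro:congruence-class-contained-condition} then yields $\rho(a,\gcd(b,x-a))\subseteq\sigma\subseteq U$.

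\emph{Degenerate cases.} The only delicate point is the edge cases coming from the conventions $b=0$ and $\gcd(0,y)=|y|$, which must be checked against the divisibility statements.
\begin{itemize}
\item If $b=0$, then $\rho=\{a\}$, and "$d\mid 0$" holds for every $d$, so the argument goes through unchanged.
\item If $x=a$, then $\gcd(b,0)=b$ and the new module is just $\rho$. Both sides then hold trivially: $\rho\subseteq U$, and some $\sigma\in C(\rho)$ exists by Proposition~\ref{pro:module-contained-in-maximal} and contains $a=x$.
\item If both $b=0$ and $x=a$, the period becomes $0$, which again denotes $\{a\}$ and stays consistent.
\end{itemize}
I expect no real obstacle beyond confirming that Proposition~\ref{pro:congruence-class-contained-condition} and Lemma~\ref{lem:element-contained-in-cartesian-congruence-class} remain valid when the period is $0$.
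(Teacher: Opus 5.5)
Your proof is correct and follows essentially the same route as the paper: the forward direction combines Proposition~\ref{pro:module-contained-in-maximal} with $\rho\subseteq\rho(a,\gcd(b,x-a))$ (the paper phrases this via Lemma~\ref{lem:set-of-modules-to-which-it-can-converge-decrease}), and the converse combines $d\mid b$ with $d\mid(x-a)$, where you rebase $\sigma$ at $a$ instead of computing $d\mid(x-c)-(a-c)$. Your separate degenerate-case check is already covered by the general argument, and the standing assumption $\rho\subseteq U$ is never actually used.
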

\begin{proof}
Since $\rho(a,b)$ is refined by $x$, we have that $\rho'=\rho(a,b')$ where $b'=\gcd(b,x-a)$ is contained in $U$. Then, we have by Proposition~\ref{pro:module-contained-in-maximal} that $C(\rho')\neq\varnothing$. It follows from Lemma~\ref{lem:set-of-modules-to-which-it-can-converge-decrease} that $C(\rho')\subseteq{C(\rho)}$ because $\rho\subseteq\rho'$ by Proposition~\ref{pro:congruence-class-contained-condition}. Thus, there is $\sigma\in{C(\rho)}$ such that $\rho'\subseteq\sigma$. Therefore, $x\in\sigma$ since $x\in\rho'$ by Lemma~\ref{lem:element-contained-in-cartesian-congruence-class} because $b'\mid(x-a)$.

It remains to prove the opposite implication. Since $x\in\sigma(c,d)$ we have that $d\mid(x-c)$ by Lemma~\ref{lem:element-contained-in-cartesian-congruence-class}. Moreover, by the definition of maximal covering $\rho\subseteq\sigma$, so $d\mid{b}$ and $d\mid(a-c)$ by Proposition~\ref{pro:congruence-class-contained-condition}. We can conclude that $d\mid(x-a)$ because $d$ divides $x-c-(a-c)$. Therefore, $d\mid{b'}$ since any common divisor of $b$ and $x-a$ divides $\gcd(b,x-a)$. Taking $d\mid{b'}$ and $d\mid(a-c)$ together, we obtain by Proposition~\ref{pro:congruence-class-contained-condition} that $\rho'\subseteq\sigma\subseteq{U}$. Therefore, $\rho$ is refined by $x$.
\end{proof}

Only a polynomial number of counterexamples is required for a module in the hypothesis to become maximal.
\begin{proposition}\label{pro:convergence-of-modules}
Let $\rho^r$ be a module in a hypothesis. There exists $r\leq\log_2(2\left|x_{\ell}\right|)+2$ such that $\rho^{r}$ is equal to a maximal module $\sigma$. For such an $r$, we have $C(\rho^{r})=\{\sigma\}$, and the equivalence query no longer returns counterexamples that refine $\rho^{r}$.
\end{proposition}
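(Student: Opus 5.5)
The plan is to follow the argument of Proposition~\ref{pro:convergence-of-module}, replacing the fixed target period $b$ by the maximal modules in the covering $C(\rho^r)$. Write $\rho^r=\rho(a,b_r)$, with $a$ the counterexample that created the module. The sequence of hypothesis modules generated by refinements is increasing: $b_{r+1}=\gcd(b_r,x-a)$ divides $b_r$, so Proposition~\ref{pro:congruence-class-contained-condition} gives $\rho^r\subseteq\rho^{r+1}$. Each refinement happens only when $x\notin H^n\supseteq\rho^r$. Hence $b_r\nmid(x-a)$ by Lemma~\ref{lem:element-contained-in-cartesian-congruence-class}, so for $r\geq 2$ the inclusion is strict and $b_{r+1}$ is a proper divisor of $b_r$.

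Next I bound the number of refinements. The first nonzero period is $b_2=|x-a|\leq 2|x_\ell|$. Every later refinement replaces the period by a proper divisor, which removes at least one prime factor. A proper divisor of a positive integer $m$ can be taken at most $\log_2 m$ times before reaching $1$. So there are at most $\log_2(2|x_\ell|)+2$ refinements in total, counting the initialisation $\rho(a,0)$ and the step to $b_2$.

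Once the module can no longer be refined, I show it is maximal. By Proposition~\ref{pro:hypothesis-contained}, $\rho^r\subseteq U$, and Proposition~\ref{pro:module-contained-in-maximal} gives some $\sigma\in C(\rho^r)$. Suppose $\rho^r\subsetneq\sigma$. Then some $x\in\sigma\setminus\rho^r$ exists, and by Proposition~\ref{pro:condition-to-be-contained} the module $\rho(a,\gcd(b_r,x-a))$ is contained in $U$. This is the main obstacle: the algorithm only refines on counterexamples the teacher actually returns, so a module could stop being refined without being maximal if the teacher never supplies such an $x$. I would handle this by stating the claim relative to the refinement sequence. Since that sequence is strictly increasing, bounded by the divisor chain above, and contained in $\sigma$, its final term must equal $\sigma$ once all points of $\sigma\setminus\rho^r$ are covered. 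The equivalence query must eventually return such points, because $\sigma\subseteq U$ while the hypothesis is not yet equal to $U$. Each such point, by Proposition~\ref{pro:condition-to-be-contained}, refines either this module or another one. I expect to argue only that the refinement count of this module stays within the bound, and that when it stops, $\rho^r=\sigma$ is maximal.

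Finally, if $\rho^r=\sigma$ is maximal, then any $\sigma'\in C(\rho^r)$ satisfies $\sigma\subseteq\sigma'$, so maximality of $\sigma$ gives $\sigma'=\sigma$ and $C(\rho^r)=\{\sigma\}$. Any later counterexample $x$ lies outside $H^n\supseteq\sigma$. By Proposition~\ref{pro:condition-to-be-contained}, refining $\rho^r$ would require $x\in\sigma'$ for some $\sigma'\in C(\rho^r)=\{\sigma\}$, which is impossible. So no further counterexample refines $\rho^r$.
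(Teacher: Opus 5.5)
Your bound on the chain is sound. Every refining counterexample lies outside $H^n\supseteq\rho^r$, so each period after $b_2=|x-a|\le 2|x_\ell|$ is a proper divisor of the previous one, giving at most $\log_2(2|x_\ell|)+2$ terms. Your final paragraph is also correct and matches the paper.

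The gap is the middle step. You claim that ``the equivalence query must eventually return such points, because $\sigma\subseteq U$ while the hypothesis is not yet equal to $U$''. This fails: the points of $\sigma\setminus\rho^r$ may all be covered by \emph{other} modules of the hypothesis. The teacher then never returns them, and the hypothesis can reach $U$ while $\rho^r\subsetneq\sigma$. Concretely, take the intersecting union $U=\rho(0,2)\cup\rho(0,3)$, where $C(U)=\{\rho(0,2),\rho(0,3)\}$, and the counterexamples $0,6,2,3,9$. The first module evolves as $\rho(0,0)\to\rho(0,6)\to\rho(0,2)$. The counterexample $3$ refines nothing, since $\rho(0,1)\not\subseteq U$, and creates $\rho(3,0)$. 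Then $9$ refines this to $\rho(3,6)$. Now $\rho(0,2)\cup\rho(3,6)=U$ and the algorithm stops. Yet $\rho(3,6)\subsetneq\rho(0,3)$ is not maximal, even though $C(\rho(3,6))=\{\rho(0,3)\}$. Your remark that ``its final term must equal $\sigma$ once all points of $\sigma\setminus\rho^r$ are covered'' conflates coverage by the whole hypothesis with coverage by $\rho^r$ itself.

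So the existence part of the statement cannot be proved as written. It holds only conditionally, e.g.\ if the module keeps receiving refining counterexamples as long as it is not maximal. The paper's own proof passes over exactly this point: it invokes Proposition~\ref{pro:convergence-of-module}, which presupposes that all counterexamples come from a single fixed target module. You identified the obstacle correctly, but your proposed fix does not remove it.

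What your argument does establish unconditionally is that each hypothesis module is refined at most $\log_2(2|x_\ell|)+1$ times after its creation. Combine this with Corollary~\ref{cor:no-more-than-k} and the fact that every counterexample either creates or refines a module. Together these already give the counterexample bound of Theorem~\ref{pro:number-of-counterexamples}, and $H^n=U$ follows once the loop terminates. The maximality claim should therefore be dropped or stated conditionally, not proved.
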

\begin{proof}
A module $\rho^r$ in the hypothesis is contained in $U$ for every $r\in\mathbb{N}$ by Proposition~\ref{pro:hypothesis-contained}. Thus, for all $r\in\mathbb{N}$ we have that $\rho^r$ is contained in at least one maximal module $\sigma$ by Proposition~\ref{pro:module-contained-in-maximal}. We also have that the counterexamples by which it was refined $x_{i_1},\ldots,x_{i_r}$ belong to $\rho^r$ for all $r\in\mathbb{N}$ because $a=x_{i_1}$ and $b_r=\gcd(x_{i_2}-a,\ldots,x_{i_r}-a)$ divides $x_{i_j}-a$ for all $j=2,\ldots,r$. By Proposition~\ref{pro:convergence-of-module}, we conclude that there exists $r\leq\log_2(2\left|x_{\ell}\right|)+2$ such that $\rho^r=\sigma$.

By definition, a maximal module is not contained in any other maximal module. Hence, $C\left(\rho^{r}\right)=\{\sigma\}$ since for this particular $r$, we have $\rho^{r}=\sigma$. Moreover, $\sigma \subseteq H$ and $\sigma \subseteq U$ and therefore $\sigma\cap((U\setminus{H})\cup(H\setminus{U}))=\varnothing$. By Proposition~\ref{pro:condition-to-be-contained}, the only counterexamples that refine $\rho^{r}$ are those from $\sigma$, whereas the equivalence query returns only examples from $((U\setminus{H})\cup(H\setminus{U}))$. Consequently, once this $r$ is reached, the equivalence query no longer returns counterexamples that refine $\rho^{r}$.
\end{proof}

The maximal coverings of the modules in the hypothesis are pairwise disjoint at each iteration of Algorithm~\ref{alg:union-of-main}. This will be an important argument in showing that the number of modules in the hypothesis is at most the number of maximal modules.
\begin{proposition}\label{pro:disjointness-of-coverings}
$\{C(\rho)~|~\rho\in{H^n}\}$ is pairwise disjoint for all $n\in\mathbb{N}$.
\end{proposition}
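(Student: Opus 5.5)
We argue by induction on $n$, following how $H^{n+1}$ is built from $H^n$. In the base case $H^0=\varnothing$ and there is nothing to prove. Assume $\{C(\rho)~|~\rho\in H^n\}$ is pairwise disjoint and let $x_{n+1}\in U\setminus H^n$ be the next counterexample. The key observation, used in both cases, is monotonicity: when a module is refined, $\rho_i^{r_n(i)}\subseteq\rho_i^{r_{n+1}(i)}$ by Proposition~\ref{pro:congruence-class-contained-condition}. Hence $C(\rho_i^{r_{n+1}(i)})\subseteq C(\rho_i^{r_n(i)})$ by Lemma~\ref{lem:set-of-modules-to-which-it-can-converge-decrease}, and coverings can only shrink. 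So any two modules that were already in $H^n$ keep disjoint coverings after refinement.

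\textbf{Case 2} (at least one module is refined) therefore follows at once. The number of modules stays at $k_n$, and each new covering is a subset of the corresponding old covering. Subsets of pairwise disjoint sets are pairwise disjoint.

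\textbf{Case 1} is the substantive step, since a new module $\rho(x_{n+1},0)=\{x_{n+1}\}$ is added. We must show that $C(\{x_{n+1}\})$ is disjoint from $C(\rho_i^{r_n(i)})$ for every $i\le k_n$. Suppose for contradiction that some $\sigma$ lies in both. Then $\{x_{n+1}\}\subseteq\sigma$, so $x_{n+1}\in\sigma$, and $\sigma\in C(\rho_i^{r_n(i)})$. By Proposition~\ref{pro:condition-to-be-contained}, $x_{n+1}\in U$ together with $x_{n+1}\in\sigma\in C(\rho_i^{r_n(i)})$ gives $\rho(a_i,\gcd(b_{i,r_n(i)},x_{n+1}-a_i))\subseteq U$. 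So the subset query in \FProcMCM{} answers $\top$ and the $i$th module is refined. This contradicts the assumption of Case 1.

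The step that needs care is one detail in this contradiction: we must confirm that a positive subset query really counts as a refinement, i.e.\ that it changes $H$. If $\gcd(b,x_{n+1}-a)=b$, the tuple would be unchanged and $H'=H$, which would put us in Case 1 even though the query succeeded. This cannot happen. Equality would mean $b\mid(x_{n+1}-a)$, so $x_{n+1}\in\rho(a,b)\subseteq H^n$ by Lemma~\ref{lem:element-contained-in-cartesian-congruence-class}, contradicting $x_{n+1}\notin H^n$. For $b=0$ we use $\gcd(0,y)=|y|$ and $x_{n+1}\neq a$, and the same argument applies. So whenever the subset query succeeds, the module genuinely changes, and the case split above is exhaustive and consistent. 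This completes the induction.
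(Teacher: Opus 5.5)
Your proof is correct and follows the paper's argument. The paper also inducts on $n$, handles Case 2 by the shrinking of coverings via Lemma~\ref{lem:set-of-modules-to-which-it-can-converge-decrease}, and handles Case 1 via Proposition~\ref{pro:condition-to-be-contained}; your extra check that a positive subset query always changes the tuple (since $x_{n+1}\notin H^n$) is a detail the paper leaves implicit when it equates ``refined'' with a successful subset query.
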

\begin{proof}
By induction, the base case $H^0=\varnothing$  is immediate. Assume $H^n$ satisfies the condition. For the counterexample $x_{n+1}$, there are two cases. In the first case, none of the modules $\rho\in{H^n}$ is refined by $x_{n+1}$, and $H^{n+1}=H^n\cup\{\rho(x_{n+1},0)\}$. It follows from Proposition~\ref{pro:condition-to-be-contained} that for all $\rho\in{H^n}$ and all $\sigma\in{C(\rho)}$ we have $x_{n+1}\not\in\sigma$. Moreover, for all $\sigma'\in{C(\rho(x_{n+1},0))}$ we have that $x_{n+1}\in\sigma'$ because $x_{n+1}\in\rho(x_{n+1},0)$ and $\rho(x_{n+1},0)\subseteq\sigma'$ by definition of maximal covering. Thus, $C(\rho(x_{n+1},0))\cap{C(\rho)}=\varnothing$ for every $\rho\in{H^n}$, so $\{C(\rho)~|~\rho\in{H^{n+1}}\}$ is pairwise disjoint. In the second case, $H^{n+1}=\{\rho_1^{r_{n+1}(1)},\ldots,\rho_{k_n}^{r_{n+1}(k_n)}\}$ and $\rho_i^{r_{n}(i)}\subseteq\rho_i^{r_{n+1}(i)}$ for all $i=1,\ldots,k_n$. Hence, by Lemma~\ref{lem:set-of-modules-to-which-it-can-converge-decrease} for all $i=1,\ldots,k_n$ we have $C(\rho_i^{r_{n+1}(i)})\subseteq{C(\rho_i^{r_{n}(i)})}$. Therefore, $\{C(\rho)~|~\rho\in{H^{n+1}}\}$ is pairwise disjoint after processing $x_{n+1}$.
\end{proof}

We overload the notation $C(\cdot)$. For a hypothesis $H$, we define $C(H)$ as $\cup\{C(\rho)~|~\rho\in{H}\}$. At each iteration of Algorithm~\ref{alg:union-of-main}, the number of modules in $H$ is at most $|C(H)|$.
\begin{proposition}\label{pro:at-most-h}
$k_n\leq\left|C(H^n)\right|$ for all $n\in\mathbb{N}$.
\end{proposition}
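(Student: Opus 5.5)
The plan is to combine Proposition~\ref{pro:disjointness-of-coverings} with the fact that every maximal covering is non-empty. Because the coverings $C(\rho)$ for $\rho\in H^n$ are pairwise disjoint, a union of $k_n$ non-empty, pairwise disjoint sets has at least $k_n$ elements. The inequality $k_n\leq|C(H^n)|$ then follows immediately, since $C(H^n)$ is by definition the union $\cup\{C(\rho)~|~\rho\in H^n\}$.

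The first step will show that $C(\rho)\neq\varnothing$ for every module $\rho$ in $H^n$. By Proposition~\ref{pro:hypothesis-contained}, $H^n\subseteq U$, so each such $\rho$ satisfies $\rho\subseteq U$. Proposition~\ref{pro:module-contained-in-maximal} then gives a maximal module $\sigma\in C(U)$ with $\rho\subseteq\sigma$. Thus $\sigma\in C(\rho)$, and the covering is non-empty.

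The second step will handle the indexing. The $k_n$ modules $\rho_1^{r_n(1)},\ldots,\rho_{k_n}^{r_n(k_n)}$ are indexed by $i$, and the disjointness statement of Proposition~\ref{pro:disjointness-of-coverings} should be read over these indices: for $i\neq j$ we have $C(\rho_i^{r_n(i)})\cap C(\rho_j^{r_n(j)})=\varnothing$. The induction proving that proposition already works index by index. In Case~1, the new module's covering is disjoint from every old covering. In Case~2, each covering only shrinks by Lemma~\ref{lem:set-of-modules-to-which-it-can-converge-decrease}.

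Disjointness across indices also rules out the possibility that two distinct indices carry the same module, which would shrink the set $H^n$ below $k_n$ elements. If $\rho_i^{r_n(i)}=\rho_j^{r_n(j)}$ for some $i\neq j$, their coverings would be equal and non-empty, contradicting disjointness. So the map $i\mapsto C(\rho_i^{r_n(i)})$ sends the $k_n$ indices to pairwise disjoint non-empty subsets of $C(H^n)$.

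To conclude, choose one maximal module $\sigma_i$ from each covering $C(\rho_i^{r_n(i)})$. Disjointness makes the map $i\mapsto\sigma_i$ injective into $C(H^n)$, so $k_n\leq|C(H^n)|$. The one point needing care is this indexed reading of the disjointness proposition; the rest is immediate from the cited results.
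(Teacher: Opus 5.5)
Your proposal is correct and follows the paper's proof: each covering $C(\rho_i^{r_n(i)})$ is non-empty by Propositions~\ref{pro:hypothesis-contained} and~\ref{pro:module-contained-in-maximal}, and pairwise disjointness (Proposition~\ref{pro:disjointness-of-coverings}) then gives $k_n\leq\sum_{i=1}^{k_n}|C(\rho_i^{r_n(i)})|=|C(H^n)|$. Your point that disjointness must be read index by index, so that no two indices carry the same module and the sum does not double count, is a clarification the paper leaves implicit, and the same induction justifies it.
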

\begin{proof}
For every $n\in\mathbb{N}$, we have
$$k_n\leq\sum_{i=1}^{k_n}|C(\rho_i^{r_n(i)})|=|C(H^n)|.$$
The inequality is given by Proposition~\ref{pro:hypothesis-contained} and Proposition~\ref{pro:module-contained-in-maximal}, since $H^n\subseteq{U}$ then $\rho_i^{r_n(i)}\subseteq{U}$ for all $i=1,\ldots,k_n$. The equality is given by Proposition~\ref{pro:disjointness-of-coverings}.
\end{proof}

At each iteration
of Algorithm~\ref{alg:union-of-main}, the number of modules in the hypothesis is at most the number of maximal modules, since $C(H)\subseteq{C(U)}$.
\begin{corollary}\label{cor:no-more-than-k}
$k_n\leq\left|C(U)\right|$ for all $n\in\mathbb{N}$.
\end{corollary}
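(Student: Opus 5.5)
The statement to prove is Corollary~\ref{cor:no-more-than-k}: $k_n\leq|C(U)|$ for all $n\in\mathbb{N}$. The plan is to combine Proposition~\ref{pro:at-most-h} with the observation that $C(H^n)$ is a subset of $C(U)$.

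First, fix $n$. Proposition~\ref{pro:at-most-h} gives $k_n\leq|C(H^n)|$, where $C(H^n)=\cup\{C(\rho)\mid\rho\in H^n\}$.

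Second, show $C(H^n)\subseteq C(U)$. This is immediate from the definition of the maximal covering. For each module $\rho$ of the hypothesis, $C(\rho)$ is by definition the set of those $\sigma\in C(U)$ with $\rho\subseteq\sigma$. Hence $C(\rho)\subseteq C(U)$, and taking the union over all $\rho\in H^n$ gives $C(H^n)\subseteq C(U)$.

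Third, since both sets are finite, $|C(H^n)|\leq|C(U)|$. Finiteness of $C(U)$ follows from Proposition~\ref{pro:maximal-congruence-classes-cover-residues}. Write $U$ in normal form $\rho(A,h)$ using Proposition~\ref{pro:normal-form}. Every maximal module then has a period $d$ dividing $h$ and a residue in $[0,d-1]$, so there are only finitely many candidates.

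Chaining the three steps yields $k_n\leq|C(H^n)|\leq|C(U)|$. There is no real obstacle. The substantive work, namely the disjointness in Proposition~\ref{pro:disjointness-of-coverings} and the non-emptiness of each covering, is already contained in Proposition~\ref{pro:at-most-h}. The only point needing care is the finiteness of $C(U)$ in the third step.
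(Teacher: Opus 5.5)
Your proof is correct and is essentially the paper's argument: Proposition~\ref{pro:at-most-h} combined with $C(H^n)\subseteq C(U)$, which holds directly by the definition of the maximal covering. Your third step, on finiteness, is not needed, since the inclusion already gives $|C(H^n)|\leq|C(U)|$ as cardinalities. That step would also need extra care if some modules of $U$ have period $0$, because the $\mathrm{lcm}$ used for the normal form then degenerates.
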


The number of counterexamples required by the algorithm to learn a union of modules is polynomial in the size of the smallest representation and in the size of the largest counterexample.
\begin{theorem}\label{pro:number-of-counterexamples}
There exists $n\leq{|C(U)|}\log_2(2\left|x_{\ell}\right|)+2|C(U)|$ such that $H^n=U$.
\end{theorem}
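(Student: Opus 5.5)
We bound the number of counterexamples by a potential argument. Every counterexample either opens a new module in the hypothesis (Case~1) or refines at least one existing module (Case~2), so the total count $n$ is at most the number of Case~1 events plus the number of Case~2 events. By Corollary~\ref{cor:no-more-than-k}, $k_n\leq|C(U)|$ for all $n$. Modules are never removed from $H$, so Case~1 occurs at most $|C(U)|$ times.

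For Case~2, we charge each counterexample to one module of the hypothesis that it refines. For a fixed module $\rho_i$, its sequence $\rho_i^1\subset\rho_i^2\subset\cdots$ is built exactly as in Section~\ref{subsec:one}: the representative is the counterexample that created it, and the period is the gcd of the differences with the later counterexamples that refined it. Strictness of each step follows because a refining counterexample lies outside $H^n\supseteq\rho_i^{r}$, so $b_{i,r+1}$ is a proper divisor of $b_{i,r}$.

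By Proposition~\ref{pro:convergence-of-modules}, after at most $\log_2(2|x_\ell|)+2$ steps the module equals a maximal module, and from then on no counterexample refines it. This uses the same argument as Proposition~\ref{pro:convergence-of-module}. The first refinement gives a nonzero period $b_{i,2}\leq 2|x_\ell|$, since it is a difference of two counterexamples. Each later refinement divides the period by at least a prime, and the period never drops below the period of any maximal module still in $C(\rho_i^r)$. Hence each module is refined at most $\log_2(2|x_\ell|)+1$ times.

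Counting the creating counterexample together with the refinements, each module accounts for at most $\log_2(2|x_\ell|)+2$ counterexamples. Summing over at most $|C(U)|$ modules gives $n\leq|C(U)|\log_2(2|x_\ell|)+2|C(U)|$ once no further counterexample exists. At that point the equivalence query returns $\top$, and with Proposition~\ref{pro:hypothesis-contained} this means $H^n=U$.

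The main obstacle is the charging in Case~2, since one counterexample may refine several modules simultaneously. This only helps: charging it to any one refined module still uses up one refinement step of that module, so the total number of Case~2 counterexamples is at most the total number of refinements. A second subtlety is that a module may stabilize at a maximal module without the hypothesis being complete. This is harmless because the counterexample bound is per module, and termination follows simply because counterexamples cannot continue forever. Every counterexample either creates a module (bounded in number) or strictly refines one (bounded per module), so the process halts with $e$ returning $\top$.
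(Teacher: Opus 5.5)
Your proof is correct and follows essentially the same argument as the paper. Each counterexample advances the refinement count of at least one hypothesis module, with creation counting as the first step. Each count is capped at $\log_2(2|x_\ell|)+2$ by Proposition~\ref{pro:convergence-of-modules}, and Corollary~\ref{cor:no-more-than-k} caps the number of modules at $|C(U)|$. Two differences are worth noting. Splitting creations from refinements replaces the paper's convention of $|C(U)|$ unrefined placeholder modules. Concluding $H^n=U$ from termination and the query returning $\top$ is a slightly more robust finish than the paper's claim that every module converges to a distinct maximal module.
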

\begin{proof}
If the equivalence query returns no counterexample, then, by definition of the query, the current hypothesis $H$ coincides with the target $U$. Therefore, the algorithm terminates and returns a correct hypothesis. We prove that the number of counterexamples returned is at most $|C(U)|\log_2(2\left|x_{\ell}\right|)+2|C(U)|$.

By Corollary~\ref{cor:no-more-than-k}, for the purposes of this proof we may assume that we start with $|C(U)|$ modules in the hypothesis, none of which has been refined by any counterexample. Thus, $H^0=\cup_{i=1}^{|C(U)|}\rho_i^{r_0(i)}$ where $r_0(i)=0$ and $\rho_i^0=\varnothing$ for all $i=1,\ldots,|C(U)|$. For modules $\rho_i^{r_n(i)}$ that have not been refined (i.e., $r_n(i)=0$), we use the convention $C(\rho_i^{r_n(i)})=C(U)\setminus{C(H^n )}$. This convention ensures that for all $\sigma\in{C(U)}$ and all $n\in\mathbb{N}$, there exists $\rho$ in $H^n$ such that $\sigma\in{C(\rho)}$. We now explain this. By Proposition~\ref{pro:at-most-h}, if $C(H^n)$ does not contain all maximal modules, then the number of refined modules with $r_n(i)\geq{0}$ is less than $|C(U)|$. Hence, there exists at least one module such that $r_n(i)=0$, whose maximal covering is $C(U)\setminus{C(H^n)}$ by the convention above.

Given a counterexample $x_{n+1}$, by Proposition~\ref{pro:smallest-representation-maximal-modules} there exists a maximal module $\sigma$ such that $x_{n+1}\in\sigma$. As we saw in the previous paragraph, for all $\sigma\in{C(U)}$ there exists $\rho_i^{r_n(i)}$ in $H^n$ such that $\sigma\in{C(\rho_i^{r_n(i)})}$. By Proposition~\ref{pro:condition-to-be-contained} we have that $\rho_i^{r_n(i)}$ is refined by $x_{n+1}$, so $r_{n+1}(i)=r_n(i)+1$. Since for every counterexample at least one module is refined, we have that $n\leq\sum_{i=1}^{|C(U)|}r_n(i)$.

For each $i=1,\ldots,|C(U)|$ there exists $r_i\leq\log_2(2\left|x_{\ell}\right|)+2$ such that $\rho_i^{r_i}$ is equal to a maximal module and no more counterexamples that refine $\rho_{i}^{r_{n}(i)}$ are returned by the equivalence query once $r_{n}(i)=r_{i}$, as stated in Proposition~\ref{pro:convergence-of-modules}. Thus, we have $r_{n}(i)\leq{r_{i}}$ for all $n\in\mathbb{N}$. Since $r_n(i)$ is initially equal $0$, for every $n$ each refined module increases $r_n(i)$ by one and at least one is refined, there exists $n$ for which $r_n(i)=r_i$ for all $i=1,\ldots,|C(U)|$. For that particular $n$, we have that  $n\leq\sum_{i=1}^{|C(U)|}r_n(i)=\sum_{i=1}^{|C(U)|}r_i$. It follows that there exists $n\leq{|C(U)|}\log_2(2\left|x_{\ell}\right|)+2|C(U)|$ such that every module in $H^n$ is equal to a maximal module. By Proposition~\ref{pro:convergence-of-modules} and Proposition~\ref{pro:disjointness-of-coverings}, the hypothesis has $|C(U)|$ different maximal modules. Hence, $H^n$ is equal to $U$ by Proposition~\ref{pro:smallest-representation-maximal-modules}.
\end{proof}

\section{Efficiency of learning a union of intersecting affine modules}\label{sec:intersecting}

We show that $\mathcal{C}^*_U$ is efficiently exactly learnable with equivalence and subset queries and that subset queries can be replaced by membership queries when a common element of the modules in the union is known. Building on the results above, we now state the following theorem.

\begin{theorem}\label{the:efficiency-intersecting}
$\mathcal{C}^*_U$ is efficiently exactly learnable with equivalence and subset queries.
\end{theorem}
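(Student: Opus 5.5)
The plan is to run Algorithm~\ref{alg:union-of-main} on the target $U_*\in\mathcal{C}^*_U$ and combine the counterexample bound of Theorem~\ref{pro:number-of-counterexamples} with the characterization of the smallest representation given by Proposition~\ref{pro:smallest-representation-made-of-maximal-cc}.

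\textbf{Correctness.} Algorithm~\ref{alg:union-of-main} stops only when the equivalence query returns $\top$, so its output is correct. Theorem~\ref{pro:number-of-counterexamples} shows that this happens after at most $n\leq|C(U_*)|\log_2(2|x_\ell|)+2|C(U_*)|$ counterexamples.

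\textbf{Counting modules.} Let $k$ be the number of modules in a smallest representation of $U_*$. By Proposition~\ref{pro:smallest-representation-made-of-maximal-cc}, that representation is exactly $C(U_*)$, so $|C(U_*)|=k$. Hence the number of counterexamples is at most $k\log_2(2|x_\ell|)+2k$, matching the bound in the abstract. In particular $k$ is at most the size of the smallest representation, since each module contributes at least one bit.

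\textbf{Running time.} The remaining task is to bound the work done per counterexample. By Corollary~\ref{cor:no-more-than-k}, the hypothesis never contains more than $k$ tuples. Each call to $\FProcUH$ therefore performs at most $k$ gcd computations and at most $k$ subset queries, plus one set comparison and possibly one insertion. All stored numbers are either counterexamples or gcds of differences of counterexamples, so their bit-size is $O(\log|x_\ell|)$. Gcds are computable in time polynomial in these sizes. Each subset query is issued on a single module $\rho(a,b)$, which is a concept in $\mathcal{C}_M\subseteq\mathcal{C}^*_U$ and so is a legal query. Each equivalence query is issued on a union of at most $k$ modules. Multiplying the number of iterations by the per-iteration cost gives a total time polynomial in $k$ and $\log|x_\ell|$, hence in the size of the smallest representation and the size of the largest counterexample.

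\textbf{Main obstacle.} The delicate point is that equivalence queries must be posed with hypotheses $H\in\mathcal{C}^*_U$. Intermediate hypotheses might not admit an intersecting representation, for example two singletons $\{x_1\},\{x_2\}$ before any refinement. I would first check whether the learning framework allows hypotheses from the larger class of finite unions of modules. Since every query answer in the proof uses only $H\subseteq U_*$ (Proposition~\ref{pro:hypothesis-contained}), this relaxation costs nothing. If the framework is strict, I would argue instead that only positive counterexamples are ever needed, so a teacher accepting arbitrary unions of modules suffices. Everything else is routine accounting.
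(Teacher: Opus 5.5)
Your proposal is correct and follows the paper's proof essentially step for step. The paper's argument has three parts: the counterexample bound of Theorem~\ref{pro:number-of-counterexamples}; at most $|C(U_*)|$ modules per iteration, hence at most that many subset queries and gcds, by Corollary~\ref{cor:no-more-than-k}; and $|C(U_*)|$ equal to the number of modules in the smallest representation by Proposition~\ref{pro:smallest-representation-made-of-maximal-cc}, which is at most its size. Your remarks on the bit-sizes of stored numbers and on query legality are more careful than the paper, and you are right that intermediate hypotheses such as $\{x_1\}\cup\{x_2\}$ need not lie in $\mathcal{C}^*_U$. The paper passes over this by implicitly allowing equivalence queries on arbitrary finite unions of modules; under a strict reading of the model, your fallback of a teacher accepting such unions is likewise a relaxation rather than a fix.
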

\begin{proof}
Algorithm~\ref{alg:union-of-main} computes a hypothesis that is equal to $U_*$ after at most $|C(U_*)|\log_2(2\left|x_{\ell}\right|)+2|C(U_*)|$ counterexamples, as it is shown in Theorem~\ref{pro:number-of-counterexamples}. \sloppy{Thus, the while loop in Algorithm~\ref{alg:union-of-main} is executed at most $|C(U_*)|\log_2(2\left|x_{\ell}\right|)+2|C(U_*)|$ times. Moreover, the loop inside $\FProcMCM$ is executed at most $|C(U_*)|$ times by Corollary~\ref{cor:no-more-than-k}, and the execution time of $\gcd$ is polynomial in the size of the largest counterexample. We have by Proposition~\ref{pro:smallest-representation-made-of-maximal-cc} that $|C(U_*)|$ is equal to the number of modules in the smallest representation. Since $|C(U_*)|\leq\text{size}(U_*)$, we have that the algorithm is polynomial in the size of the smallest representation and the size of the largest counterexample.}
\end{proof}

\subsection{Learning with a known common element}

We begin by describing a modified algorithm for the case in which a common element $q$ of the union of intersecting modules is known. In Algorithm~\ref{alg:union-of-main-2}, the hypothesis is refined upon receiving each counterexample from an equivalence query.

\begin{algorithm}[hbt!]
\caption{Learning $U_*$ using a common element $q$.}\label{alg:union-of-main-2}

$H=\left\{\right\}$\;

\While{$(x=e(\cup\{\rho(q, b)~|~(q, b)\in{H}\}))$}{
$H=\FProcUH{H,x}$\;
}
\Return{$H$\;}
\end{algorithm}

If none of the modules in the hypothesis is refined by the current counterexample $x$, the algorithm creates a new module with the common element $q$ as the representative and period equal $x-q$, as in Algorithm~\ref{alg:update-hypothesis-2}.

\begin{algorithm}[hbt!]
\caption{Hypothesis refinement.}\label{alg:update-hypothesis-2}
\Fn{\FProcUH($H,x$)}{

$H'=\FProcMCM{H,x}$

\If{$H'={H}$}{\Return $H\cup\{(q,x-q)\}$}
\Return $H'$
}
\end{algorithm}

We can use membership queries rather than subset queries in the learning algorithm, as a consequence of the following result.
\begin{proposition}\label{pro:belong-to-a-union}
$q+b\in{U_*}$ if and only if $\rho(q,b)\subseteq{U_*}$ where $q$ is a common element.
\end{proposition}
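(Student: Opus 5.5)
The reverse implication is immediate. By Lemma~\ref{lem:element-contained-in-cartesian-congruence-class}, $q+b\in\rho(q,b)$, so $\rho(q,b)\subseteq U_*$ gives $q+b\in U_*$. All the work is in the forward implication. The plan is to place $q+b$ inside a maximal module, use Proposition~\ref{pro:all-contain-q} to see that this maximal module also contains $q$, and then apply Proposition~\ref{pro:module-with-x-q-is-contained}.

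\textbf{Forward direction.} Assume $q+b\in U_*$.

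First, by Proposition~\ref{pro:smallest-representation-maximal-modules}, $U_*=\cup\{\sigma~|~\sigma\in C(U_*)\}$. Hence some maximal module $\sigma\in C(U_*)$ contains $q+b$. A direct way to see this: $\rho(q+b,0)=\{q+b\}\subseteq U_*$, so Proposition~\ref{pro:module-contained-in-maximal} yields a $\sigma\in C(U_*)$ with $q+b\in\sigma$.

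Second, $q$ is a common element, that is, $q\in\cap_{i=1}^k\rho_i$ for some intersecting representation of $U_*$. Proposition~\ref{pro:all-contain-q} then gives $q\in\sigma$.

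Third, apply Proposition~\ref{pro:module-with-x-q-is-contained} with $x=q$ and $y=q+b$. This gives $\rho(q,b)=\rho(q,(q+b)-q)\subseteq\sigma\subseteq U_*$, which is the claim.

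\textbf{Main obstacle.} The only delicate step is the second one. It needs the maximal module containing $q+b$ to contain $q$, and this is exactly where the intersecting hypothesis enters. In a general union it fails: $q+b$ may lie only in a module that misses $q$. Proposition~\ref{pro:all-contain-q} is already stated for every $\sigma\in C(U_*)$, so it suffices to note that "common element" refers to the same intersecting representation to which that proposition applies. Two edge cases also need a line each. If $b=0$, both sides reduce to $q\in U_*$, which holds. A negative $b$ is harmless, since periods are taken up to sign and $\rho(q,b)=\rho(q,|b|)$.
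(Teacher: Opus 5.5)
Your proof is correct and follows the paper's argument: place $q+b$ in some $\sigma\in C(U_*)$, use Proposition~\ref{pro:all-contain-q} to get $q\in\sigma$, and conclude $\rho(q,b)\subseteq\sigma\subseteq U_*$. The only cosmetic difference is in the last step, where the paper rewrites $\sigma$ as $\sigma(q,d)$ and invokes Corollary~\ref{cor:membership-eqiuvalent-to-subset}, while you invoke Proposition~\ref{pro:module-with-x-q-is-contained}, which expresses the same fact.
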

\begin{proof} Assume that $q+b\in{U_*}$ then there exists a maximal module $\sigma$ such that $q+b\in\sigma$. By Proposition~\ref{pro:all-contain-q} we have that $\sigma$ contains $q$, so $\sigma=\sigma(q,d)$ and by Corollary~\ref{cor:membership-eqiuvalent-to-subset} we have that $\rho(q,b)\subseteq\sigma(q,d)\subseteq{U_*}$. The converse is clear since $q+b\in\rho(q,b)$ by Lemma~\ref{lem:element-contained-in-cartesian-congruence-class}.
\end{proof}

In Algorithm~\ref{alg:update-suitable-group}, for each module $\rho(q,b)$ in the hypothesis, we check whether $q+\gcd(b,x-q)$ belongs to $U$ and in that case replace $\rho(q,b)$ with $\rho(q,\gcd(b,x-q))$. By Proposition~\ref{pro:belong-to-a-union}, this occurs whenever $\rho(q,\gcd\left(b,x-q\right))$ is contained in the union.

\begin{algorithm}[hbt!]
\caption{Module refinement.}\label{alg:update-suitable-group-2}
\Fn{\FProcMCM($H,x$)}{

$H'=H$

\For{$(q,b)\in H$}{
	\If{$m(q+\gcd\left(b,x-q\right))$}{
	    $H'=\left(H'\setminus\{(q,b)\}\right)\cup\{(q,\gcd\left(b,x-q\right))\}$
	}
}
\Return $H'$
}
\end{algorithm}

The following propositions are used to show that if Algorithm~\ref{alg:union-of-main} is correct and efficient, then so is Algorithm~\ref{alg:union-of-main-2}. From the beginning, the corresponding modules in both hypotheses are refined by the same counterexamples.

\begin{proposition}\label{pro:new-modules-same-covering}
Let $x\in{U^*}$ and a common element $q$. The maximal coverings of $\rho(x,0)$ and $\rho(q,x-q)$ are equal.
\end{proposition}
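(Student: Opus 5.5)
We prove the two inclusions $C(\rho(x,0))\subseteq C(\rho(q,x-q))$ and $C(\rho(q,x-q))\subseteq C(\rho(x,0))$ by characterizing both coverings as the same set. Namely, we show that each of them equals the set of maximal modules $\sigma\in C(U_*)$ with $x\in\sigma$.

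For $\rho(x,0)=\{x\}$ this is immediate from the definition of maximal covering: $\sigma\in C(\rho(x,0))$ if and only if $\{x\}\subseteq\sigma$, that is, $x\in\sigma$. Here we use $x\in U_*$ only to guarantee, via Proposition~\ref{pro:module-contained-in-maximal}, that this set is non-empty, although non-emptiness is not needed for the equality itself.

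For $\rho(q,x-q)$, take first $\sigma\in C(\rho(q,x-q))$. Since $x=q+(x-q)\in\rho(q,x-q)\subseteq\sigma$ by Lemma~\ref{lem:element-contained-in-cartesian-congruence-class}, we get $x\in\sigma$. Conversely, let $\sigma\in C(U_*)$ with $x\in\sigma$. By Proposition~\ref{pro:all-contain-q} the common element $q$ lies in $\sigma$, so by Lemma~\ref{lem:every-element-is-a-representative} we may write $\sigma=\sigma(q,d)$. Since $x,q\in\sigma$, Proposition~\ref{pro:module-with-x-q-is-contained} gives $\rho(q,x-q)\subseteq\sigma$. Alternatively, Corollary~\ref{cor:membership-eqiuvalent-to-subset} applied to $q+(x-q)=x\in\sigma(q,d)$ gives the same conclusion. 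Hence $\sigma\in C(\rho(q,x-q))$.

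The only nontrivial ingredient is the step where every maximal module containing $x$ must also contain $q$. This is exactly where the intersecting hypothesis enters, through Proposition~\ref{pro:all-contain-q}. Without it, a maximal module could contain $x$ but miss $q$, and the two coverings would differ. The degenerate case $x=q$, where $\rho(q,0)=\{q\}=\rho(x,0)$, is trivial and consistent with the argument.
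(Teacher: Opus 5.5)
Your proof is correct and matches the paper's argument. One inclusion holds because $x\in\rho(q,x-q)$; the paper phrases this as $\{x\}\subseteq\rho(q,x-q)$ together with Lemma~\ref{lem:set-of-modules-to-which-it-can-converge-decrease}. The other holds because every maximal module containing $x$ also contains $q$ by Proposition~\ref{pro:all-contain-q}, so $\rho(q,x-q)\subseteq\sigma$ by Proposition~\ref{pro:module-with-x-q-is-contained}.
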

\begin{proof}
First we prove that $C(\rho(q,x-q))\subseteq{C(\rho(x,0))}$. Since $\{x\}=\rho(x,0)$ we have that $\rho(x,0)\subseteq\rho(q,x-q)$ by Lemma~\ref{lem:element-contained-in-cartesian-congruence-class}. Hence, the inclusion follows by  Lemma~\ref{lem:set-of-modules-to-which-it-can-converge-decrease}. Now, we prove that $C(\rho(x,0))\subseteq{C(\rho(q,x-q))}$. We take an arbitrary maximal module $\sigma$ from $C(\rho(x,0))$. It contains $\{x\}=\rho(x,0)$ by the definition of maximal covering, and we have that $\sigma$ also contains $q$ by Proposition~\ref{pro:all-contain-q}. Therefore, $\rho(q,x-q)\subseteq\sigma$ by Proposition~\ref{pro:module-with-x-q-is-contained}.
\end{proof}

After each refinement, the corresponding modules continue to be refined by the same counterexamples.
\begin{proposition}\label{pro:same-maximal-covering}
Let $\rho^r$ and $\overline{\rho}^r$ be modules in a hypothesis with $C(\rho^r)=C(\overline{\rho}^r)$. Let $\rho^{r+1}$ and $\overline{\rho}^{r+1}$ denote their refinements by the same counterexample. Then $C(\rho^{r+1})=C(\overline{\rho}^{r+1})$.
\end{proposition}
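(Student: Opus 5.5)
The plan is to characterize $C(\rho^{r+1})$ purely in terms of $C(\rho^r)$ and the counterexample $x$, so that the equality $C(\rho^r)=C(\overline{\rho}^r)$ transfers automatically. Write $\rho^r=\rho(a,b)$ and $\rho^{r+1}=\rho(a,b')$ with $b'=\gcd(b,x-a)$, where $x\in U_*$ is the common counterexample. The claim to establish is
\[
C(\rho^{r+1})=\{\sigma\in C(\rho^r)~|~x\in\sigma\}.
\]
The same identity then holds for $\overline{\rho}^{r+1}$, and since the right-hand side depends only on $C(\rho^r)$ and $x$, the hypothesis gives the result.

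For the inclusion ``$\subseteq$'', note that $\rho^r\subseteq\rho^{r+1}$ by Proposition~\ref{pro:congruence-class-contained-condition}, because $b'\mid b$ and $b'\mid 0$. Lemma~\ref{lem:set-of-modules-to-which-it-can-converge-decrease} then gives $C(\rho^{r+1})\subseteq C(\rho^r)$. Moreover, $b'\mid(x-a)$, so $x\in\rho^{r+1}$ by Lemma~\ref{lem:element-contained-in-cartesian-congruence-class}. Hence every $\sigma\in C(\rho^{r+1})$ contains $\rho^{r+1}$ and therefore contains $x$.

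For ``$\supseteq$'', take $\sigma(c,d)\in C(\rho^r)$ with $x\in\sigma$. I would repeat the divisibility argument from the second half of the proof of Proposition~\ref{pro:condition-to-be-contained}. From $\rho^r\subseteq\sigma$ we get $d\mid b$ and $d\mid(a-c)$. From $x\in\sigma$ we get $d\mid(x-c)$, hence $d\mid(x-a)$ and so $d\mid b'$. Proposition~\ref{pro:congruence-class-contained-condition} then yields $\rho^{r+1}\subseteq\sigma$, that is, $\sigma\in C(\rho^{r+1})$.

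The main obstacle is the edge cases rather than the core divisibility argument. The first is $b=0$: then $\rho^r=\{a\}$ and $b'=|x-a|$, so I would check that $\gcd(0,y)=|y|$ and divisibility by $0$ behave as needed. The second is the reading of ``refined by the same counterexample''. The characterization above holds whenever the modules are updated by $x$, and in the situation where both are actually refined it is consistent with Proposition~\ref{pro:condition-to-be-contained}. Indeed, refinement of either module happens iff some $\sigma$ in the common covering contains $x$. So both modules are refined together or neither is, and the statement is symmetric.
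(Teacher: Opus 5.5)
Your proposal is correct and follows the paper's proof: the paper also establishes $C(\rho^{r+1})=\{\sigma\in C(\rho^r)\mid x\in\sigma\}$ by the same two inclusions, using that $d\mid b_r$ and $d\mid(x-a)$ give $d\mid\gcd(b_r,x-a)$, and then transfers the identity to $\overline{\rho}^{r+1}$ through $C(\rho^r)=C(\overline{\rho}^r)$. Your explicit appeal to Lemma~\ref{lem:set-of-modules-to-which-it-can-converge-decrease} for $C(\rho^{r+1})\subseteq C(\rho^r)$ spells out a step the paper leaves implicit, and your edge-case remarks are consistent with the paper's conventions.
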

\begin{proof}
We can observe that $C(\rho^{r+1})$ does not contain any maximal module from $C(\rho^r)$ which does not contain $x$ because $\rho^{r+1}=\rho(a,\gcd(b_r,x-a))$ contains $x$. On the other hand, suppose that $x$ belongs to a maximal module $\sigma(c,d)$ from $C(\rho^r)$. Since $\rho^r\subseteq\sigma$ we have that $d\mid{b_r}$ and $d\mid(x-a)$, so $d\mid\gcd(b_r,x-a)$. Therefore, $\rho^{r+1}\subseteq\sigma$ and $\sigma$ belongs to $C(\rho^{r+1})$. Thus, $C(\rho^{r+1})=\{\sigma\in{C(\rho^r)}~|~x\in\sigma\}$. Analogously, since $C(\rho^r)=C(\overline{\rho}^r)$ we have that $C(\overline{\rho}^{r+1})=\{\sigma\in{C(\rho^r)}~|~x\in\sigma\}$.
\end{proof}

We now state the main result of this section.
\begin{theorem}
When a common element is known, $\mathcal{C}^*_U$ is efficiently exactly learnable with equivalence and membership queries.
\end{theorem}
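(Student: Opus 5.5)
The plan is to reduce the statement to Theorem~\ref{the:efficiency-intersecting} by a simulation argument. I will show that Algorithm~\ref{alg:union-of-main-2} and Algorithm~\ref{alg:union-of-main} evolve in lockstep when both face the same sequence of counterexamples. Concretely, I will couple the two runs and prove by induction on $n$ the following invariant. Both hypotheses have the same number $k_n$ of modules. After reindexing, the $i$th module $\overline{\rho}_i$ of Algorithm~\ref{alg:union-of-main-2} and the $i$th module $\rho_i$ of Algorithm~\ref{alg:union-of-main} satisfy $C(\overline{\rho}_i)=C(\rho_i)$. And a counterexample $x_{n+1}$ refines $\overline{\rho}_i$ if and only if it refines $\rho_i$.

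First, the membership test in Algorithm~\ref{alg:update-suitable-group-2} is equivalent to the subset test. Every module in the new hypothesis has representative $q$, and $q+\gcd(b,x-q)\in U_*$ holds if and only if $\rho(q,\gcd(b,x-q))\subseteq U_*$ by Proposition~\ref{pro:belong-to-a-union}. So the refinement rule is the same as in Algorithm~\ref{alg:update-suitable-group}. By Proposition~\ref{pro:condition-to-be-contained}, whether a module is refined by $x$ depends only on its maximal covering, namely on whether $x\in\sigma$ for some $\sigma\in C(\rho)$.

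For the induction, the base case $H^0=\varnothing$ is trivial. In the step, equal coverings give equal refinement decisions. If no module is refined, Algorithm~\ref{alg:union-of-main} adds $\rho(x,0)$ and Algorithm~\ref{alg:union-of-main-2} adds $\rho(q,x-q)$, and these have equal coverings by Proposition~\ref{pro:new-modules-same-covering}. Otherwise, the refined pairs keep equal coverings by Proposition~\ref{pro:same-maximal-covering}.

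The step I expect to be the main obstacle is transferring the counterexample and termination bound. The two hypotheses are different sets: $\rho(q,x-q)$ is larger than $\{x\}$. So an equivalence query on one hypothesis does not return the same counterexamples as a query on the other, and the coupling above is only formal. I would handle this by rerunning the proof of Theorem~\ref{pro:number-of-counterexamples} directly for Algorithm~\ref{alg:union-of-main-2}. The ingredients used there are:
\begin{itemize}
\item Proposition~\ref{pro:hypothesis-contained}: the hypothesis stays inside $U_*$, by the subset equivalence together with $\rho(q,x-q)\subseteq U_*$ from Propositions~\ref{pro:belong-to-a-union} and~\ref{pro:module-with-x-q-is-contained}.
\item Disjointness of coverings (Proposition~\ref{pro:disjointness-of-coverings}), which holds because the covering invariants coincide.
\item The chain bound for each module (Proposition~\ref{pro:convergence-of-modules}), since each module is $\rho(q,b_r)$ with a $\gcd$ of successive differences and the period drops by a prime factor at each proper refinement.
\end{itemize}
One check is needed for the bound: the initial period $x-q$ has absolute value at most $|x_\ell|+|q|$. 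Hence each module is refined at most $\log_2(2|x_\ell|+2|q|)+2$ times, which stays polynomial once $q$ is counted in the input size.

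Finally, the efficiency count is the same as in Theorem~\ref{the:efficiency-intersecting}. There are at most $|C(U_*)|$ modules by Corollary~\ref{cor:no-more-than-k}, each loop iteration performs one $\gcd$ and one membership query per module, and $|C(U_*)|$ equals the size of the smallest representation by Proposition~\ref{pro:smallest-representation-made-of-maximal-cc}.
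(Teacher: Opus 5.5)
Your proposal is correct, but at the decisive step it takes a different route from the paper. The paper runs the same coupling, with one extra clause in the invariant: each module of Algorithm~\ref{alg:union-of-main} is contained in its partner, $\rho_i^{r_n(i)}\subseteq\overline{\rho}_i^{r'_n(i)}$. This holds for new modules because $\{x\}\subseteq\rho(q,x-q)$. It survives refinement by the same $x$ because $\overline{b}\mid b$ gives $\gcd(\overline{b},x-a)\mid\gcd(b,x-a)$. Hence $H^n\subseteq\overline{H^n}\subseteq U_*$, so every counterexample returned to Algorithm~\ref{alg:union-of-main-2} lies in $U_*\setminus H^n$ and is also a legitimate counterexample for Algorithm~\ref{alg:union-of-main}.

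That removes exactly the obstacle you flagged. The teacher of Algorithm~\ref{alg:union-of-main-2} is just one particular adversary for Algorithm~\ref{alg:union-of-main}, so Theorem~\ref{pro:number-of-counterexamples} applies as a black box. It gives the $q$-independent bound $|C(U_*)|\log_2(2|x_\ell|)+2|C(U_*)|$, and once $H^n=U_*$ the sandwich forces $\overline{H^n}=U_*$.

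Your alternative, redoing the counting argument directly for Algorithm~\ref{alg:union-of-main-2}, also works. The hypothesis stays in $U_*$, and the membership test coincides with the subset test by Proposition~\ref{pro:belong-to-a-union}. Every accepted refinement strictly lowers the period because $x\notin\overline{H^n}$, and there are at most $|C(U_*)|$ modules. This route is self-contained, but it duplicates the analysis and gives a looser count that depends on $|q|$. Your remark that the size of $q$ enters the running time anyway, since the algorithm computes with $x-q$, is a fair point that the paper passes over.

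One repair is needed: do not derive disjointness of coverings from ``the covering invariants coincide.'' Without the inclusion clause, your induction cannot exclude $x_{n+1}\in H^n\setminus\overline{H^n}$. In that case the subset test of Algorithm~\ref{alg:union-of-main} passes without changing the module, so its $H'=H$ branch can fire while Algorithm~\ref{alg:union-of-main-2} refines, and the runs desynchronize. Note also that Proposition~\ref{pro:disjointness-of-coverings} itself assumes genuine counterexamples.

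Argue disjointness directly instead. Since $x_{n+1}\notin\overline{H^n}$, the $H'=H$ branch means no membership test passed. By Propositions~\ref{pro:belong-to-a-union} and~\ref{pro:condition-to-be-contained}, no existing covering then contains a maximal module through $x_{n+1}$. Meanwhile $C(\rho(q,x_{n+1}-q))$ is exactly the set of maximal modules containing $x_{n+1}$ by Proposition~\ref{pro:new-modules-same-covering}. Refinements only shrink coverings by Lemma~\ref{lem:set-of-modules-to-which-it-can-converge-decrease}. Alternatively, add the inclusion to your invariant, which also makes the rerun unnecessary.
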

\begin{proof}
We simulate the execution of Algorithm~\ref{alg:union-of-main} and Algorithm~\ref{alg:union-of-main-2} in parallel on the same $U_*$ and observe that, if the former learns $U_*$, then so does the latter. Concretely, from our reasoning we can deduce that the number of counterexamples required by Algorithm~\ref{alg:union-of-main-2} is less than or equal to the number required by Algorithm~\ref{alg:union-of-main}. Moreover, the hypotheses produced by both algorithms have the same number of modules. With this in mind, a similar argument to that used in the proof of Theorem~\ref{the:efficiency-intersecting} can be applied to obtain the result. We prove it by induction in the number of counterexamples $n$. We write $H^n$ for the hypothesis of Algorithm~\ref{alg:union-of-main} and $\overline{H^{n}}$ for the hypothesis of Algorithm~\ref{alg:union-of-main-2}. The induction hypothesis is that, for every $\rho^{r_n(i)}_i$ in $H^n$ there is a corresponding module $\overline{\rho}^{r'_n(i)}_i$ in $\overline{H^{n}}$ such that $\rho^{r_n(i)}_i\subseteq\overline{\rho}^{r'_n(i)}_i$ and $C(\rho^{r_n(i)}_i)=C(\overline{\rho}^{r'_n(i)}_i)$ and every module in $\overline{H}^n$ has a corresponding module in $H^n$.

In the base case, both $H^0$ and $\overline{H^{0}}$ contain zero modules, so the induction hypothesis is met. Assume that the induction hypothesis holds for $n$. We show that it also holds for $n+1$. There are two cases for a counterexample $x_{n+1}$. In the first case none of the modules in $H^n$ is refined. Since every module in $\overline{H^{n}}$ has the same maximal covering as its corresponding module in $H^n$, if none of the modules in $H^n$ was refined by $x_{n+1}$ then none of the modules in $\overline{H^{n}}$ is refined by $x_{n+1}$ by Proposition~\ref{pro:condition-to-be-contained} and Proposition~\ref{pro:belong-to-a-union} because $C(\rho^{r_n(i)}_i)=C(\overline{\rho}^{r'_n(i)}_i)$ for all $i=1,\ldots,k_n$. Thus, $H^{n+1}=H^n\cup\rho(x,0)$, and $\overline{H^{n+1}}=\overline{H^{n}}\cup\rho(q,x-q)$. We have that $\rho(x,0)\subseteq\rho(q,x-q)$ by Lemma~\ref{lem:element-contained-in-cartesian-congruence-class}, and their maximal coverings are equal by Proposition~\ref{pro:new-modules-same-covering}. In the other case, $H^{n+1}=\cup_{i=1}^{k_n}\rho_i^{r_{n+1}(i)}$ because some module was refined. Since $C(\rho^{r_n(i)}_i)=C(\overline{\rho}^{r'_n(i)}_i)$ for all $i=1,\ldots,k_n$, if some of $\rho^{r_n(i)}_i$ is refined, then its corresponding module $\overline{\rho}^{r'_n(i)}_i$ is refined by Proposition~\ref{pro:condition-to-be-contained} and Proposition~\ref{pro:belong-to-a-union}. Thus, $\overline{H^{n+1}}=\cup_{i=1}^{k_n}\overline{\rho}_i^{r'_{n+1}(i)}$. Since $\rho^{r_n(i)}_i\subseteq\overline{\rho}_i^{r'_{n}(i)}$ we have that $\overline{b}_{i,r'_n(i)}\mid{b_{i,r_n(i)}}$, so $\gcd(\overline{b}_{i,r'_n(i)},x-a)\mid\gcd\left(b_{i,r_n(i)},x-a\right)$ where $a\in\rho^{r_n(i)}_i$. Therefore, $\rho^{r_{n+1}(i)}_i\subseteq\overline{\rho}_i^{r'_{n+1}(i)}$ by Proposition~\ref{pro:congruence-class-contained-condition}. Finally, we have that $C(\rho^{r_{n+1}(i)}_i)=C(\overline{\rho}_i^{r'_{n+1}(i)})$ by Proposition~\ref{pro:same-maximal-covering} because $C(\rho^{r_{n}(i)}_i)=C(\overline{\rho}_i^{r'_{n}(i)})$.
\end{proof}

\section*{Conclusions}

We have proved that the concept class $\mathcal{C}^*_U$ of finite unions of one-dimensional intersecting affine modules is efficiently exactly learnable using equivalence and subset queries and that the algorithm can be adapted to use membership queries instead of subset queries when a common element is known. The correctness and efficiency of our algorithm rely on the structure of the sets considered. In the case of a single module, the learning algorithm only uses equivalence queries. At each step, the conjectured period for the hypothesis is reduced by a prime factor, which guarantees termination. Since the number of prime factors is smaller than the base two logarithm, this gives the polynomial time upper bound. The case of the union of modules is more elaborate. An important concept in the analysis of the algorithm is the notion of maximal covering of a module. The fact that the maximal coverings of all the modules in the hypothesis are pairwise disjoint allows us to derive a polynomial bound on the number of modules in the hypothesis. We then show that each module of the hypothesis converges to a different maximal module. Since the smallest representation of a union of intersecting affine modules is given by its set of maximal modules, we conclude that $\mathcal{C}_U^*$ is efficiently learnable with the corresponding queries. Overall, our results contribute towards a better understanding of the efficient learnability of sets definable in linear integer arithmetic.

Although our work is restricted to unions of one-dimensional intersecting affine modules, it already exhibits structural features that do not appear in more general concept classes of the same kind. Thus, a natural next step is to study the learnability of unions of affine modules without the intersection requirement. More broadly, it would be interesting to investigate the learnability of related concept classes in higher dimensions. It remains open to prove that $\mathcal{C}_U^*$ is efficiently exactly learnable with membership queries instead of subset queries, even when a common element is not known.

\section*{Acknowledgments}

We thank Dmitry Chistikov, Pascual Jara, Daniel Stan and anonymous reviewers for their helpful feedback. González and Lin were partially supported by European Union\footnote{Views and opinions expressed are however those of the author(s) only and do not necessarily reflect those of the European Union or the European Research Council Executive Agency. Neither the European Union nor the granting authority can be held responsible for them.}
\includegraphics[width=0.75cm]{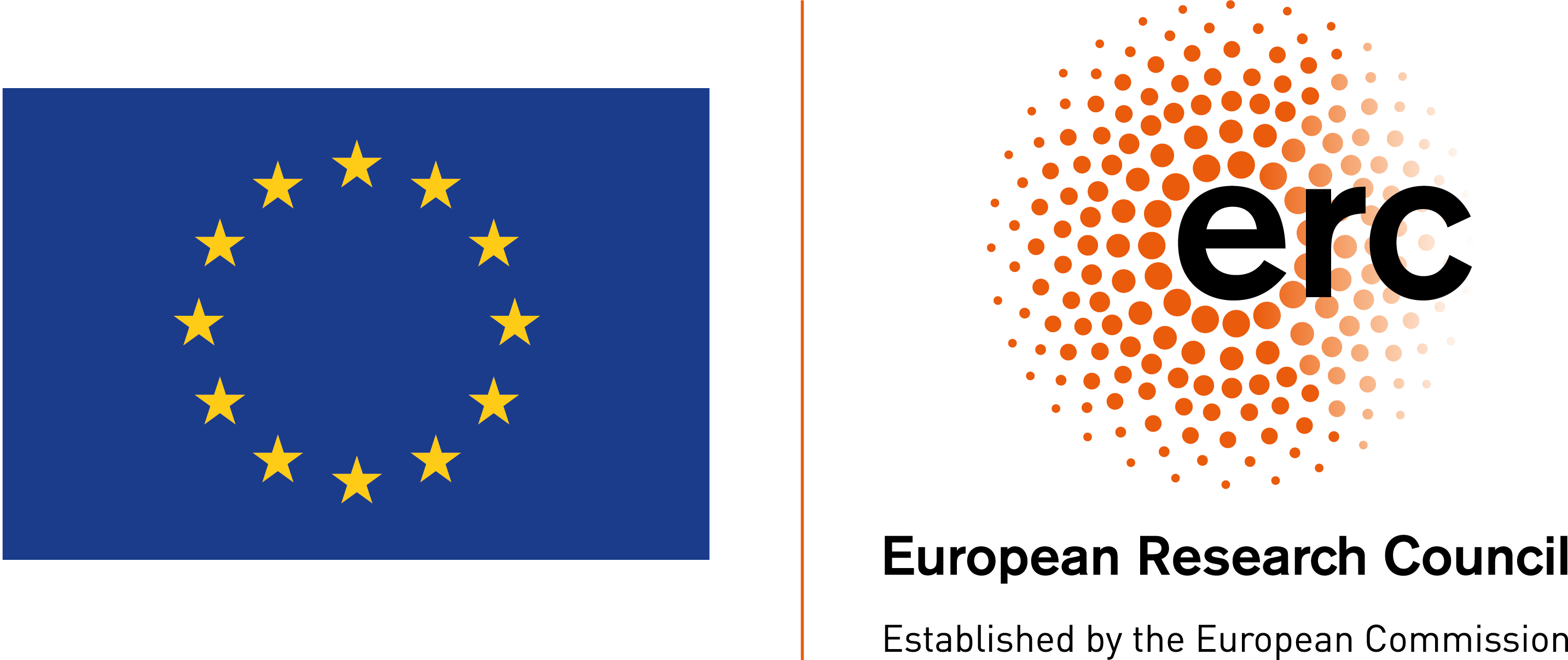} (ERC, LASD, \href{https://doi.org/10.3030/101089343}{101089343}).

\bibliographystyle{ACM-Reference-Format}
\bibliography{pmlr-sample}

@article{helmbold1992learning,
  title={Learning integer lattices},
  author={Helmbold, David and Sloan, Robert and Warmuth, Manfred K},
  journal={SIAM Journal on Computing},
  volume={21},
  number={2},
  pages={240--266},
  year={1992},
  publisher={SIAM}
}

@article{abe1995characterizing,
  title={Characterizing PAC-learnability of semilinear sets},
  author={Abe, Naoki},
  journal={Information and Computation},
  volume={116},
  number={1},
  pages={81--102},
  year={1995},
  publisher={Elsevier}
}

@article{angluin1987learning,
  title={Learning regular sets from queries and counterexamples},
  author={Angluin, Dana},
  journal={Information and computation},
  volume={75},
  number={2},
  pages={87--106},
  year={1987},
  publisher={Elsevier}
}

@book{kearns1994introduction,
  title={An introduction to computational learning theory},
  author={Kearns, Michael J and Vazirani, Umesh},
  year={1994},
  publisher={MIT press}
}

@inproceedings{markgraf2021learning,
  title={Learning Union of Integer Hypercubes with Queries: (with Applications to Monadic Decomposition)},
  author={Markgraf, Oliver and Stan, Daniel and Lin, Anthony W},
  booktitle={International Conference on Computer Aided Verification},
  pages={243--265},
  year={2021},
  organization={Springer}
}

@inproceedings{chistikov2024introduction,
  title={An introduction to the theory of linear integer arithmetic},
  author={Chistikov, Dmitry},
  booktitle={44th IARCS Annual Conference on Foundations of Software Technology and Theoretical Computer Science (FSTTCS 2024)},
  volume={323},
  pages={1},
  year={2024},
  organization={Schloss Dagstuhl—Leibniz-Zentrum f{\"u}r Informatik}
}

@article{TAKADA1992207,
title = {Learning semilinear sets from examples and via queries},
journal = {Theoretical Computer Science},
volume = {104},
number = {2},
pages = {207-233},
year = {1992},
author = {Yuji Takada}
}

@article{angluin1988queries,
  title={Queries and concept learning},
  author={Angluin, Dana},
  journal={Machine learning},
  volume={2},
  number={4},
  pages={319--342},
  year={1988},
  publisher={Springer}
}

@book{diekert2016discrete,
  title={Discrete algebraic methods: Arithmetic, cryptography, automata and groups},
  author={Diekert, Volker and Kufleitner, Manfred and Rosenberger, Gerhard and Hertrampf, Ulrich},
  year={2016},
  publisher={Walter de Gruyter GmbH \& Co KG}
}

@inproceedings{goldberg1994learning,
  title={Learning unions of boxes with membership and equivalence queries},
  author={Goldberg, Paul W and Goldman, Sally A and Mathias, H David},
  booktitle={Proceedings of the seventh annual conference on Computational learning theory},
  pages={198--207},
  year={1994}
}

@article{parikh1966context,
  title={On context-free languages},
  author={Parikh, Rohit J},
  journal={Journal of the ACM (JACM)},
  volume={13},
  number={4},
  pages={570--581},
  year={1966},
  publisher={ACM New York, NY, USA}
}

@article{littlestone1988learning,
  title={Learning quickly when irrelevant attributes abound: A new linear-threshold algorithm},
  author={Littlestone, Nick},
  journal={Machine learning},
  volume={2},
  number={4},
  pages={285--318},
  year={1988},
  publisher={Springer}
}

@article{angluin2004queries,
  title={Queries revisited},
  author={Angluin, Dana},
  journal={Theoretical Computer Science},
  volume={313},
  number={2},
  pages={175--194},
  year={2004},
  publisher={Elsevier}
}

@article{valiant1984theory,
  title={A theory of the learnable},
  author={Valiant, Leslie G},
  journal={Communications of the ACM},
  volume={27},
  number={11},
  pages={1134--1142},
  year={1984},
  publisher={ACM New York, NY, USA}
}

@book{beachy2019abstract,
  title={Abstract algebra},
  author={Beachy, John A and Blair, William D},
  year={2019},
  publisher={Waveland Press}
}

@inproceedings{kopczynski2010parikh,
  title={Parikh images of grammars: Complexity and applications},
  author={Kopczynski, Eryk and To, Anthony Widjaja},
  booktitle={2010 25th Annual IEEE Symposium on Logic in Computer Science},
  pages={80--89},
  year={2010},
  organization={IEEE}
}

@book{jones1998elementary,
  title={Elementary number theory},
  author={Jones, Gareth A and Jones, Josephine M},
  year={1998},
  publisher={Springer Science \& Business Media}
}

@inproceedings{van2020learning,
  title={Learning weighted automata over principal ideal domains},
  author={van Heerdt, Gerco and Kupke, Clemens and Rot, Jurriaan and Silva, Alexandra},
  booktitle={International Conference on Foundations of Software Science and Computation Structures},
  pages={602--621},
  year={2020},
  organization={Springer}
}

@article{buna2024learning,
  title={On learning polynomial recursive programs},
  author={Buna-Marginean, Alex and Cheval, Vincent and Shirmohammadi, Mahsa and Worrell, James},
  journal={Proceedings of the ACM on Programming Languages},
  volume={8},
  number={POPL},
  pages={1001--1027},
  year={2024},
  publisher={ACM New York, NY, USA}
}

\end{document}